\newtheorem{lemma}{Lemma}
\newtheorem{theorem}[lemma]{Theorem}
\newtheorem{proposition}[lemma]{Proposition}
\newtheorem{corollary}[lemma]{Corollary}
\newtheorem{example}{Example}
\newcommand{\valX}[1]{\llbracket #1 \rrbracket}
\newcommand{\valXG}[2]{\valX{#2}_{#1}}
\newcommand{\rhs}{\rho}
\newcommand{\ftrees}{\text{hor}}
\newcommand{\fspine}{\text{spine}}
\newcommand{\fchop}{\triangle}
\newcommand{\rootof}[1]{\alpha_{#1}}
\newcommand{\nf}[1]{\text{nf}_{#1}}
\newcommand{\concv}{\mathop{\boxbar}}
\newcommand{\conch}{\mathop{\boxdot}}
\newcommand{\mergh}{\mathop{\varodot}}
\newcommand{\mergv}{\mathop{\varobar}}
\newcommand{\rank}{\text{rank}}
\newcommand{\auf}{(}
\newcommand{\zu}{)}
\newcommand{\lan}{\langle}
\newcommand{\ran}{\rangle}
\newcommand{\spineslp}[1]{#1^{\concv}}
\newcommand{\ribslp}[1]{#1^{\conch}}
\newcommand{\gspine}[2]{\fspine_{#1}(#2)}
\newcommand{\Up}{\mathcal{U}}
\newcommand{\Low}{\mathcal{L}}
\newcommand{\Hor}{\mathcal{H}}
\newcommand{\Ver}{\mathcal{V}}
\newcommand{\kl}[1]{{#1}_\ell}
\newcommand{\km}[1]{{#1}_m}
\newcommand{\kr}[1]{{#1}_r}
\newcommand{\ks}[1]{\sigma_{#1}}
\newcommand{\Kl}[1]{{#1}_\ell}
\newcommand{\Km}[1]{{#1}_m}
\newcommand{\Kr}[1]{{#1}_r}
\newcommand{\Ks}[1]{\sigma_{\!#1}}
\newcommand{\ksort}[1]{\text{sort}^{#1}}
\newcommand{\kchop}[1]{{#1}_\fchop}
\newcommand{\kcont}[1]{{#1}^{\pi}}
\newcommand{\ksibl}[1]{R_{#1}}
\newcommand{\eps}{\varepsilon}
\newcommand{\bx}{\mathrm{\triangledown_x}}
\newcommand{\addroot}[2]{#1 ( #2 )}
\newcommand{\llex}{\text{llex}}
\newcommand{\fcanon}{\nf{\mathcal{C}}}
\newcommand{\sleq}{\subseteq}
\newcommand{\nfc}{\fcanon}
\newcommand{\Comm}{\mathcal{C}}
\newcommand{\Down}[1]{M_{#1}}
\newcommand{\args}[1]{\text{args}(#1)}
\newcommand{\Mean}[2]{\valXG{#2}{#1}}
\newcommand{\nfa}{\nf{\mathcal{A}}}
\newcommand{\pull}{\phi}
\begin{document}

\title{Grammar-based Compression of Unranked Trees}

\subjclass[2010]{E.4 Data compaction and compression}

\author{Adri{\`a}~Gasc{\'o}n}
\email{agascon@inf.ed.ac.uk}

\author{Markus~Lohrey}
\email{lohrey@eti.uni-siegen.de}

\author{Sebastian~Maneth}
\email{sebastian.maneth@gmail.com}

\author{Carl~Philipp~Reh}
\email{reh@eti.uni-siegen.de}

\author{Kurt~Sieber}
\email{sieber@informatik.uni-siegen.de}

\address[Adri{\`a}~Gasc{\'o}n]{Warwick University and Alan Turing Institute, UK}
\address[Markus~Lohrey, Carl~Philipp~Reh, Kurt~Sieber]{Universit{\"a}t Siegen, Germany}
\address[Sebastian~Maneth]{Universit{\"a}t Bremen, Germany}

\begin{abstract}
  We introduce forest straight-line programs (FSLPs) as a compressed representation
  of unranked ordered node-labelled trees. FSLPs are based on the operations of forest algebra
  and generalize tree straight-line programs. We compare the succinctness of FSLPs with two other
  compression schemes for unranked trees: top dags and tree straight-line programs of first-child/next sibling
  encodings. Efficient translations between these formalisms are provided.
  Finally, we show that equality of
  unranked trees in the setting where certain symbols are associative
  or commutative can be tested in polynomial time. This generalizes previous results
  for testing isomorphism of compressed unordered ranked  trees.
\end{abstract}

\maketitle

\section{Introduction}

Generally speaking, grammar-based compression represents an object succinctly by means of a small context-free grammar.
In many grammar-based compression formalisms such a grammar can be exponentially smaller than the object.
Henceforth, there is a great interest in problems that can be solved in polynomial
time on the grammar, while requiring at least linear time on the original uncompressed object.
One of the most well-known and fundamental such problems is testing
equality of the strings
produced by two context-free string grammars, each producing exactly one string  (such grammars
are also known as straight-line programs --- in this paper we use the term
string straight-line program, SSLP for short).
Polynomial time solutions to this problem were discovered, in different contexts by different groups of people,
see the survey \cite{lohrey_survey} for references.

Grammar-based compression has been generalized from strings to ordered ranked node-labelled trees,
by means of linear context-free tree grammars generating exactly one tree~\cite{DBLP:journals/is/BusattoLM08}.
Such grammars are also known as tree straight-line programs, TSLPs for short, see~\cite{Lohrey15dlt} for a survey.
Equality of the trees produced by two TSLPs
can also be checked in polynomial time: one constructs SSLPs
for the pre-order traversals of the trees, and then applies the above mentioned result for SSLPs, see~\cite{DBLP:journals/is/BusattoLM08}.
The tree case becomes more complex when \emph{unordered} ranked trees are considered.
Such trees can be represented using TSLPs, by simply ignoring the
order of children in the produced tree. Checking isomorphism of unordered ranked trees
generated by TSLPs was recently shown to be solvable in polynomial
time~\cite{DBLP:conf/icalp/LohreyMP15}.
The solution transforms the TSLPs so that they generate canonical
representations of the original trees
and then checks equality of these canonical forms.

\newcounter{horiverti}
\renewcommand{\thehoriverti}{\roman{horiverti}}
\newcommand{\nextstephoriverti}[1]{\refstepcounter{horiverti}\thehoriverti\label{#1}}
The aforementioned result for ranked trees cannot be applied
to {\em unranked}
trees (where the number of children of a node is not bounded),
which arise for instance in XML document trees.
This is unfortunate, because ({\it i}\/)~grammar-based compression is particularly
effective for XML document trees (see~\cite{DBLP:journals/is/LohreyMM13}), and
({\it ii}\/)~XML document trees can often be considered unordered (one speaks
of ``data-centric XML'', see e.g.~\cite{DBLP:journals/mst/AbiteboulBV15,DBLP:conf/lata/BoiretHNT15,DBLP:journals/mst/BonevaCS15,DBLP:journals/dke/SundaramM12,DBLP:journals/tcyb/ZhangDW15}),
allowing even stronger grammar-based compressions~\cite{DBLP:conf/icdt/LohreyMR17}.

In this paper we introduce a generalization of TSLPs and SSLPs that allows to produce ordered unranked node-labelled trees
and forests (i.e., ordered sequences of trees) that we call {\em forest straight-line programs}, FSLPs
for short.  In contrast to TSLPs, FSLPs can compress very wide and flat trees. For instance,
the tree $f\auf a,a,\ldots, a\zu$ with $n$ many $a$'s is not compressible with
TSLPs but can be produced by an FSLP of size $O(\log n)$. FSLPs are based on the operations
of horizontal and vertical forest composition from forest
algebras~\cite{DBLP:conf/birthday/BojanczykW08}.
The main contributions of this paper are the following:

\subsection{\bf Comparison with other formalisms.} We compare the succinctness of FSLPs with
two other grammar-based formalisms for compressing unranked node-labelled ordered trees:
TSLPs for `first-child/next-sibling'' (fcns) encodings and top dags. The fcns-encoding is the standard
way of transforming an unranked tree into a binary tree. Then the resulting binary tree can be succinctly
represented by a TSLP.  This approach was used to apply the TreeRePair-compressor from~\cite{DBLP:journals/is/LohreyMM13}
to unranked trees. We prove that FSLPs and TSLPs for fcns-encodings are equally succinct
up to constant multiplicative factors and that one can change between both representations in linear time
(Propositions~\ref{lemma-fcns-transform} and \ref{lemma:fcns-transform-rev}).

Top dags are another formalism for compressing unranked trees~\cite{BilleGLW15}. Top dags use
horizontal and vertical merge operations for tree construction, which are very similar to the horizontal and
vertical concatenation operations from FSLPs. Whereas a top dag can be transformed in linear time into
an equivalent FSLP with a constant multiplicative blow-up (Proposition~\ref{lemma-top-dag-transform}),
the reverse transformation (from an FSLP to a top dag) needs time $O(\sigma \cdot n)$ and involves a multiplicative blow-up of size $O(\sigma)$
where $\sigma$ is the number of node labels of the tree  (Proposition~\ref{lemma:top-dag-transform-rev}). A simple example (Example~\ref{example-sigma}) shows that this
$\sigma$-factor is unavoidable.
The reason for the $\sigma$-factor is a technical restriction in the
definition of top dags: In contrast to FSLPs, top dags only allow
sharing of common subtrees but not of common subforests. Hence,
sharing between (large) subtrees which only differ in their root
labels may be impossible at all (as illustrated by
Example~\ref{example-sigma}), and this leads to the $\sigma$-blow-up
in comparison to FSLPs.
The impossibility of sharing subforests
would also complicate the technical details of our main algorithmic
results for FSLPs (in particular Proposition~\ref{lemma:fcns-transform-rev} and
Theorem~\ref{cor:main} which is discussed below) for which we make heavy use
of a particular normal form for FSLPs that exploits the sharing of proper subforests.
We therefore believe that at least for our purposes, FSLPs are a more adequate
formalism than top dags.

\subsection{\bf Testing equality modulo associativity and commutativity.} Our main algorithmic result for FSLPs can be formulated
as follows: Fix a set $\Sigma$ of node labels and
take a subset $\mathcal{C} \subseteq \Sigma$ of ``commutative'' node labels and a subset $\mathcal{A} \subseteq \Sigma$
of ``associative'' node labels. This means that for all $a \in \mathcal{A}$,  $c \in \mathcal{C}$ and all trees $t_1, t_2, \ldots, t_n$
({\it i}\/)~we do not distinguish between the trees $c\auf t_1, \ldots, t_n\zu$ and $c\auf t_{\sigma(1)}, \ldots, t_{\sigma(n)} \zu$, where
$\sigma$ is any permutation (commutativity), and
({\it ii}\/)~we do not distinguish the trees $a\auf t_1, \ldots, t_n\zu$ and $a\auf t_1, \ldots, t_{i-1}, a\auf t_i,\ldots, t_{j-1}\zu, t_j,
\ldots, t_n\zu$ for $1 \leq i \leq j \leq n+1$ (associativity). We then show that for two given FSLPs $F_1$ and $F_2$ that produce trees $t_1$ and $t_2$ (of possible
exponential size), one can check in polynomial time whether $t_1$ and $t_2$ are equal modulo commutativity and
associativity (Theorem~\ref{cor:main}).
Note that unordered tree isomorphism corresponds to the case $\mathcal{C} = \Sigma$ and $\mathcal{A}=\emptyset$
(in particular we generalize the result from~\cite{DBLP:conf/icalp/LohreyMP15} for ranked unordered trees).
Theorem~\ref{cor:main} also holds if the  trees $t_1$ and $t_2$ are given by top dags or TSLPs for the fcns-encodings,
since these formalisms can be transformed efficiently into FSLPs. Theorem~\ref{cor:main} also shows the utility of FSLPs even if one is only interested in say binary trees, which are represented by TSLPs. The law of associativity will yield very wide and flat trees that are no longer
compressible with TSLPs but are still compressible with FSLPs.

\section{Straight-line programs over algebras} \label{sec-SSLP}

We will produce strings, trees and forests by algebraic expressions over certain algebras. These expressions
will be compressed by directed acyclic graphs. In this section, we introduce the general framework, which will
be reused several times in this paper.

An algebraic structure is a tuple $\mathcal{A} = (A, f_1, \ldots, f_k)$ where $A$ is the universe and
every $f_i \colon A^{n_i} \to A$ is an operation of a certain arity $n_i$.
In this paper, the arity of all operations
will be at most  two. If $n_i = 0$, then $f_i$ is called a constant. Moreover, it will be convenient to
allow partial operations for the $f_i$.
Algebraic expressions over $\mathcal{A}$
are defined in the usual way: if $e_1, \ldots, e_{n_i}$ are algebraic expressions over $\mathcal{A}$, then
also $f_i(e_1, \ldots, e_{n_i})$ is an algebraic expressions over $\mathcal{A}$. For an algebraic
expression $e$, $\valX{e} \in A$  denotes the element to which $e$ evaluates (it can be undefined).

A {\em straight-line program} (SLP for short)
over  $\mathcal{A}$ is a tuple $P = (V, S, \rho)$, where $V$ is a set of {\em variables}, $S \in V$ is  the {\em start
  variable}, and $\rho$ maps every variable $A \in V$ to an expression of the form $f_i(A_1, \ldots, A_{n_i})$
(the so called {\em right-hand side} of $A$)
such that $A_1,\ldots,A_{n_i} \in V$ and the edge relation $E(P) =\{ (A, B) \in V \times V \mid B \text{ occurs in } \rho(A) \}$ is acyclic.
This allows to define for every variable $A \in V$ its value $\valXG{P}{A}$ inductively by
$\valXG{P}{A} = f_i(\valXG{P}{A_1}, \ldots, \valXG{P}{A_{n_i}})$
if $\rho(A) = f_i(A_1, \ldots, A_{n_i})$. Since the $f_i$ can be partially defined, the value of a variable
can be undefined. The SLP $P$ will be called {\em valid} if all values
$\valXG{P}{A}$ ($A \in V$) are defined. In our concrete setting, validity of an SLP
can be tested by a simple syntax check.
The value of $P$ is $\valX{P} = \valXG{P}{S}$.
Usually, we prove properties of SLPs
by induction along the partial order $E(P)^*$.

It will be convenient to allow for the right-hand sides $\rho(A)$ algebraic expressions over $\mathcal{A}$,
where the variables from $V$ can appear as atomic expressions. By introducing additional variables,
we can transform such an SLP into an equivalent SLP of the original form.
We define the size $|P|$ of an SLP $P$ as the total number of occurrences of operations $f_1,\ldots, f_k$ in all right-hand sides
(which is  the number of variables if all right-hand sides have the standard form $f_i(A_1, \ldots, A_{n_i})$).

Sometimes it is useful to view an SLP $P = (V,S,\rho)$ as a directed acyclic graph (dag)
$(V, E(P))$, together with the distinguished output node $S$, and the node labelling
that associates the label $f_i$ with the node $A \in V$ if $\rho(A) = f_i(A_1, \ldots, A_{n_i})$.
Note that the outgoing edges $(A,A_1), \ldots, (A,A_{n_i})$ have to be ordered since $f_i$ is in general not
commutative and that multi-edges have to be allowed. Such dags are also known as algebraic circuits
in the literature.

\subsection{\bf String straight-line programs.}
A widely studied type of SLPs are SLPs over a free monoid
$(\Sigma^*, \cdot, \varepsilon, (a)_{a \in \Sigma})$, where $\cdot$ is the concatenation operator (which, as usual, is not written
explicitly in expressions) and the empty string
$\varepsilon$ and every alphabet symbol $a \in \Sigma$ are added as constants.
We use the term {\em string straight-line programs}
(SSLPs for short) for these SLPs. If we want to emphasize the alphabet $\Sigma$, we speak of an SSLP
over $\Sigma$. In many papers, SSLPs are just called straight-line programs; see~\cite{lohrey_survey} for a survey.
Occasionally we consider SSLPs without a start variable $S$ and then write $(V,\rho)$.

\begin{example}
Consider the SSLP $G = (\{S,A,B,C\},S,\rho)$ over the alphabet $\{a,b\}$
with $\rho(S) = AAB$, $\rho(A) = CBB$, $\rho(B) = CaC$, $\rho(C) = b$.
We have $\valXG{G}{B} = bab$, $\valXG{G}{A} = bbabbab$, and
$\valX{G} = bbabbabbbabbabbab$. The size of $G$ is $8$ (six concatenation operators
are used in the right-hand sides, and there are two occurrences of constants).
\end{example}
In the next two sections, we introduce two types of algebras for trees and forests.

\section{Forest algebras and forest straight-line programs} \label{sec-FSLP}
\newcounter{fslpdef}
\renewcommand{\thefslpdef}{\roman{fslpdef}}
\newcommand{\nextstepdef}[1]{\refstepcounter{fslpdef}\thefslpdef\label{#1}}

\subsection{\bf Trees and forests.}
Let us fix a finite set $\Sigma$ of node labels for the rest of the paper.
We consider $\Sigma$-labelled rooted ordered trees, where ``ordered''
means that the children of a node are totally ordered.  Every node has a label from $\Sigma$.
Note that we make no rank assumption: the number of children of a node (also called its degree)
is not determined by its node label.
The set of nodes (resp. edges) of $t$ is denoted by $V(t)$ (resp., $E(t)$).
A {\em forest} is a (possibly empty) sequence of trees.
The size $|f|$ of a forest is the total number of nodes in $f$.
The set of all $\Sigma$-labelled forests is denoted by $\mathcal{F}_0(\Sigma)$
and the set of all $\Sigma$-labelled trees is denoted by $\mathcal{T}_0(\Sigma)$.
As usual, we can identify trees with expressions built up from symbols in $\Sigma$ and
parentheses.
Formally, $\mathcal{F}_0(\Sigma)$ and $\mathcal{T}_0(\Sigma)$
can be inductively defined as the following sets of strings over the alphabet $\Sigma \cup \{ \auf, \zu\}$.
\begin{itemize}
\item If $t_1, \ldots, t_n$ are $\Sigma$-labelled trees with $n \geq 0$, then the string
$t_1 t_2 \cdots t_n$ is a $\Sigma$-labelled forest (in particular, the empty string $\varepsilon$ is a $\Sigma$-labelled forest).
\item If $f$ is a $\Sigma$-labelled forest and $a \in \Sigma$, then $a\auf f\zu$ is a $\Sigma$-labelled tree
(where the singleton tree $a\auf\zu$ is usually written as $a$).
\end{itemize}
Let us fix a distinguished
symbol $x \not\in \Sigma$ for the rest of the paper (called the parameter).
The set of forests $f \in \mathcal{F}_0(\Sigma \cup \{x\})$ such that $x$ has a unique
occurrence in $f$ and this occurrence is at a leaf node is denoted by $\mathcal{F}_1(\Sigma)$.
Let $\mathcal{T}_1(\Sigma) = \mathcal{F}_1(\Sigma) \cap \mathcal{T}_0(\Sigma \cup \{x\})$.
Elements of $\mathcal{T}_1(\Sigma)$ (resp., $\mathcal{F}_1(\Sigma)$) are called
tree contexts (resp., forest contexts).
We finally define
$\mathcal{F}(\Sigma) = \mathcal{F}_0(\Sigma) \cup \mathcal{F}_1(\Sigma)$ and
$\mathcal{T}(\Sigma) = \mathcal{T}_0(\Sigma) \cup \mathcal{T}_1(\Sigma)$.
Following~\cite{DBLP:conf/birthday/BojanczykW08}, we define the {\em forest algebra}
$\mathsf{FA}(\Sigma) = ( \mathcal{F}(\Sigma), \conch, \concv, ( a )_{a \in \Sigma}, \varepsilon, x)$
as follows:
\begin{itemize}
\item $\conch$ is the horizontal concatenation operator: for forests $f_1, f_2 \in \mathcal{F}(\Sigma)$, $f_1 \conch f_2$
is defined if $f_1 \in \mathcal{F}_0(\Sigma)$ or $f_2 \in \mathcal{F}_0(\Sigma)$ and in this case we set
$f_1 \conch f_2 = f_1 f_2$ (i.e., we concatenate the corresponding sequences of trees).
\item $\concv$ is the vertical concatenation operator: for forests $f_1, f_2 \in \mathcal{F}(\Sigma)$, $f_1 \concv f_2$
is defined if $f_1 \in \mathcal{F}_1(\Sigma)$ and in this case $f_1 \concv f_2$ is obtained by replacing in $f_1$ the unique
occurrence of the parameter $x$ by the forest $f_2$.
\item Every $a \in \Sigma$ is identified with the unary function $a : \mathcal{F}(\Sigma) \to \mathcal{T}(\Sigma)$
that produces $a \auf f \zu$ when applied to $f \in \mathcal{F}(\Sigma)$.
\item $\varepsilon \in \mathcal{F}_0(\Sigma)$ and $x \in  \mathcal{F}_1(\Sigma)$ are constants of the forest algebra.
\end{itemize}
For better readability, we also write $f\langle g \rangle$ instead of $f \concv g$,
$fg$ instead of $f \conch g$,
and $a$ instead of $\addroot{a}{\varepsilon}$.
Note that a forest $f \in \mathcal{F}(\Sigma)$ can be also viewed as an algebraic expression over
$\mathsf{FA}(\Sigma)$, which evaluates to $f$ itself (analogously to the free term algebra).

\newcommand{\ffcns}{\text{fcns}}
\newcounter{fcnsdef}
\renewcommand{\thefcnsdef}{\roman{fcnsdef}}
\newcommand{\nextstepfcns}[1]{\refstepcounter{fcnsdef}\thefcnsdef\label{#1}}

\subsection{\bf First-child/next-sibling encoding.}
The first-child/next-sibling encoding transforms a forest
over some alphabet $\Sigma$ into a binary tree over $\Sigma \uplus \{\bot\}$.
We define
$\ffcns \colon \mathcal{F}_0(\Sigma) \to \mathcal{T}_0(\Sigma \uplus \{\bot\})$
inductively by:
({\it \nextstepfcns{item:fcns-base}}\/)~$\ffcns(\varepsilon) = \bot$ and
({\it \nextstepfcns{item:fcns-step}}\/)~$\ffcns(a\auf f \zu g)  = a\auf \ffcns(f) \ffcns(g) \zu$ for
$f, g \in \mathcal{F}_0(\Sigma)$, $a \in \Sigma$.
Thus, the left (resp., right) child of a node in $\ffcns(f)$ is the first child (resp., right sibling)
of the node in $f$ or a $\bot$-labelled leaf if it does not exist.
\begin{example}
If $f = a \auf bc\zu d\auf e \zu$ then
\begin{equation*}
\begin{split}
\ffcns(f) &= \ffcns(a \auf bc\zu d\auf e \zu)  =  a\auf \ffcns(bc) \ffcns(d\auf e \zu) \zu \\
&= a\auf b \auf \bot \ffcns(c) \zu  d \auf \ffcns(e)  \bot  \zu \zu =  a\auf b \auf \bot c\auf \bot\bot \zu \zu  d \auf e \auf \bot\bot\zu  \bot  \zu \zu .
\end{split}
\end{equation*}
\end{example}

\subsection{\bf Forest straight-line programs.}
A {\em forest straight-line program} over $\Sigma$, FSLP for short, is a valid straight-line program
over the algebra  $\mathsf{FA}(\Sigma)$
such that $\valX{F} \in \mathcal{F}_0(\Sigma)$.
Iterated vertical and horizontal concatenations allow to
generate forests, whose depth and width is exponential
in the FSLP size.
For an FSLP $F = (V,S,\rho)$ and $i \in \{0,1\}$
we define $V_i = \{ A \in V \mid \valXG{F}{A} \in \mathcal{F}_i(\Sigma) \}$.

\begin{example}
Consider the FSLP $F = (\{S,A_0,A_1, \ldots, A_n,B_0,B_1, \ldots, B_n\}, S, \rho)$ over $\{a,b,c\}$
with $\rho$ defined by
$\rho(A_0) = a$, $\rho(A_i) = A_{i-1} A_{i-1}$ for $1 \leq i \leq n$,
$\rho(B_0) = \addroot{b}{A_n x A_n}$, $\rho(B_i) = B_{i-1} \langle B_{i-1} \rangle$
for $1 \leq i \leq n$, and $\rho(S) = B_n \langle c \rangle$. We have
$\valX{F} = b \auf a^{2^n} b \auf a^{2^n} \cdots b \auf a^{2^n} c \, a^{2^n} \zu \cdots a^{2^n} \zu a^{2^n} \zu$,
where $b$ occurs $2^n$ many times.
\end{example}
\begin{example} \label{ex-FSLP}
  Consider the alphabet $\Sigma=\{a, b, c, d, e\}$. Let $n \geq 0$ be a natural number, and let
  $F = (V,S_1,\rhs)$ be the FSLP with
\begin{itemize}
\item $V_0=\{A_1, A_2, B, S_1\}$, $V_1=\{B_0, \ldots, B_n,\allowbreak C_0, \ldots, C_n\}$,
\item $\rhs(A_1) = \addroot{e}{\addroot{e}{a b} c}$,
\item $\rhs(A_2) = \addroot{e}{a \, \addroot{e}{b c}}$,
\item $\rhs(B_0) = A_1 x A_2$,
\item $\rhs(B_i) = B_{i-1} \langle B_{i-1} \rangle$ for $1\leq i\leq n$,
\item $\rhs(B) = B_{n} \langle A_1\rangle$,
\item $\rhs(C_0) = \addroot{d}{x B}$,
\item $\rhs(C_i) = C_{i-1}\langle C_{i-1} \rangle$ for $1\leq i\leq n$, and
\item $\rhs(S_1) = C_n \langle B \rangle$.
\end{itemize}
  Note that, although $F$ has size $O(n)$, $\valX{F}$ has exponential width {\em and} depth,
  as it is the tree
  \[
  \underbrace{d \auf d \auf \cdots d \auf d \auf}_{2^n \text{ many } d\auf } f \underbrace{f \zu f \zu \cdots f \zu f  \zu}_{2^n \text{ many } f\zu } ,
  \]
where $f = \valXG{F}{B}$ is the forest $(e \auf e \auf a b \zu c\zu)^{2^{n}+1}  (e \auf a \, e \auf b c\zu\zu)^{2^{n}}$.

  Now consider a second FSLP $F' = (V',S_2,\rhs')$ over $\Sigma$
  with
 \begin{itemize}
\item  $V_0'=\{D, E_0, \ldots, E_n, E, S_2\}$,
\item $V_1'=\{F_0, \ldots, F_n\}$,
\item    $\rhs(D) = \addroot{e}{a b c}$,
\item    $\rhs(E_0) = D D$,
\item    $\rhs(E_i) = E_{i-1} E_{i-1}$ for $1 \leq i\leq n$,
\item    $\rhs(E) = E_{n} D$,
\item    $\rhs(F_0) = \addroot{d}{E x}$,
\item    $\rhs(F_i) = F_{i-1} \langle F_{i-1} \rangle$ for $1\leq i\leq n$, and
\item    $\rhs(S_2) = F_n \langle E\rangle$.
\end{itemize}
Then $\valX{F'}$ is the tree
   \[ \underbrace{d \auf f' d \auf f' \cdots d \auf f' d \auf f'}_{2^n \text{ many } d\auf f'} f' \zu \zu \cdots \zu \zu ,
   \]
  where $f' = \valXG{F'}{E}$ is the forest $e \auf a b c\zu^{2^{n+1}+1}$.

  Note that if we consider $e$ as associative (meaning that $e\auf s \, e \auf t u \zu \zu = e \auf e \auf st \zu u \zu$ for all trees $s,t,u$), then $f$ and $f'$ represent the same forest. If in addition we consider
  $d$ as commutative (meaning that $d \auf st \zu = d\auf ts\zu$ for all trees $s,t$)
  then the FSLPs $F$ and $F'$ in fact represent the same unranked tree.
  Our main contribution is a polynomial time algorithm for performing this kind of equivalence check.
\end{example}

FSLPs generalize {\em tree straight-line programs} (TSLPs for short)
that have been used for the compression of ranked trees before, see e.g.~\cite{Lohrey15dlt}. We only need
TSLPs for binary trees. A TSLP over $\Sigma$ can then be defined as an FSLP
$T = (V,S,\rho)$ such that for every $A \in V$, $\rho(A)$ has the form
$a$, $\addroot{a}{BC}$, $\addroot{a}{xB}$,
$\addroot{a}{Bx}$, or $B \langle C \rangle$ with $a \in \Sigma$, $B,C \in V$.
TSLPs can be used in order to compress the fcns-encoding of an unranked tree;
see also \cite{DBLP:journals/is/LohreyMM13}. It is not hard to see that an FSLP $F$ that produces
a binary tree can be transformed into a TSLP $T$ such that $\valX{F} = \valX{T}$ and
$|T| \in O(|F|)$. This is an easy corollary of our normal form for FSLPs that we introduce next
(see also the proof of Proposition~\ref{lemma-fcns-transform}).

\subsection{\bf Factorization of SSLPs.}
\label{subsec:factorization}
Let $\Sigma$ be an alphabet, let $\Sigma_1 \subseteq \Sigma$ and $\Sigma_2 = \Sigma \setminus \Sigma_1$. Then every
string $w \in \Sigma^*$ has a unique factorization $w = v_0 a_1 v_1 \cdots a_n v_n$
with $n \geq 0$, $a_i \in \Sigma_1$ and $v_0, v_i \in \Sigma_2^*$ for $i \in \{1, \ldots, n\}$,
which we call the $\Sigma_1$-{\em factorization of} $w$.
Let $G = (V, \rhs)$ and $G' = (V', \rhs')$ be SSLPs over $\Sigma$. We call $G'$
a $\Sigma_1$-{\em factorization of} $G$ if $\valXG{G}{A} = \valXG{G'}{A}$
for all $A \in V$, and there are sets $\Up, \Low$ of ({\em upper} and
{\em lower}) variables such that $V' = V \uplus \Up \uplus \Low$ and
\begin{displaymath}
  \rhs'(V) \subseteq \Low \cup \Low \Sigma_1 \Low \cup \Low\,\Up \Sigma_1 \Low \qquad
  \rhs'(\Up) \subseteq \Sigma_1 \Low \cup \Up\,\Up \qquad
  \rhs'(\Low) \subseteq \{\varepsilon\} \cup \Sigma_2 \cup \Low \Low .
\end{displaymath}
Note that the partition $V' = V \uplus \Up \uplus \Low$ is uniquely determined
by $V'$ and $\rhs$. Moreover, $\valXG{G'}{A} \in \Sigma_2^*$ for every $A \in \Low$
and $\valXG{G'}{A} \in (\Sigma_1\Sigma_2^*)^*$ for every $A \in \Up$. This implies
that $G'$ describes the $\Sigma_1$-factorization $w = v_0 a_1 v_1 \cdots a_n v_n$
for every string $w = \valXG{G'}{A} = \valXG{G}{A}$ $(A \in V)$ in the
following sense:
If $\rhs'(A) = B \in \Low$, then $n = 0$ and $\valXG{G'}{B} = v_0$.
If $\rhs'(A) = BaC \in \Low \Sigma_1 \Low$, then $n = 1$, $\valXG{G'}{B} =
v_0$, $a = a_1$ and $\valXG{G'}{C} = v_1$. Finally, if $\rhs'(A) = BCaD
\in \Low\,\Up \Sigma_1 \Low$ then $n \geq 2$, $\valXG{G'}{B} = v_0$, $a = a_n$,
$\valXG{G'}{D} = v_n$ and there are variables $C_i,D_i$
with $\valXG{G'}{C} = \valXG{G'}{C_1} \cdots \valXG{G'}{C_{n-1}}$,
$\rhs(C_i) = a_i D_i$ and $\valXG{G'}{D_i} = v_i$ for $i \in \{1,\ldots,n-1\}$.

\begin{lemma}
\label{lemma:slp_split}
Given an SSLP $G = (V, \rhs)$ over $\Sigma$ and $\Sigma_1 \subseteq \Sigma$, one can compute in linear time a
$\Sigma_1$-factorization of $G$ of size $O(|G|)$.
\end{lemma}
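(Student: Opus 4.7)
The plan is to compute the $\Sigma_1$-factorization data by a single bottom-up traversal of the dag underlying $G$. For each $A\in V$ I store a \emph{status} recording whether $\valXG{G}{A}$ contains $0$, $1$, or at least $2$ letters from $\Sigma_1$, together with a pointer $A_\ell\in\Low$ to a variable evaluating to the prefix $v_0\in\Sigma_2^*$; if the status is at least $1$, a separator $\sigma_A\in\Sigma_1$ holding the last $\Sigma_1$-letter $a_n$ and a pointer $A_r\in\Low$ for the suffix $v_n$; and if the status is $\geq 2$, a pointer $A_m\in\Up$ whose value is the middle block $a_1 v_1\cdots a_{n-1} v_{n-1}$. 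I then set $\rhs'(A)$ to $A_\ell$, $A_\ell\,\sigma_A\,A_r$, or $A_\ell\,A_m\,\sigma_A\,A_r$ according to the status, matching the three shapes allowed for $\rhs'(V)$.

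A single global $\Low$-variable producing $\varepsilon$ handles all empty prefixes and suffixes, so the terminal cases $\rhs(A)\in\{\varepsilon\}\cup\Sigma_2$ (status $0$) and $\rhs(A)\in\Sigma_1$ (status $1$, with $A_\ell$ and $A_r$ both pointing to the global $\varepsilon$-variable) are immediate. The substantive case is $\rhs(A)=BC$, which I split into nine subcases by the pair of statuses of $B$ and $C$. When exactly one side has status $0$, I absorb its $\Low$-value into the adjacent $\Low$-pointer of the other side via one new $\Low\to\Low\,\Low$ production, and inherit all remaining data from the nontrivial side. When both $B$ and $C$ carry $\Sigma_1$-letters, I splice them across the seam by creating two fresh variables: $M\in\Low$ with $\rhs'(M)=B_r\,C_\ell$ and $U\in\Up$ with $\rhs'(U)=\sigma_B\,M$; then $A_\ell=B_\ell$, $\sigma_A=\sigma_C$, $A_r=C_r$, and the new middle $A_m$ is set to one of $U$, $B_m\,U$, $U\,C_m$, or $B_m\,U\,C_m$, depending on which of $B_m,C_m$ already exists.

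Every original variable spawns $O(1)$ fresh $\Low$- and $\Up$-variables and productions, so $|G'|\in O(|G|)$ and the construction runs in linear time after a topological sort of $(V,E(G))$. Correctness is a routine induction along $E(G)^*$ checking that the stored pointers really describe the $\Sigma_1$-factorization of $\valXG{G}{A}$ and that the new right-hand sides lie in the prescribed shapes $\{\varepsilon\}\cup\Sigma_2\cup\Low\Low$ for $\Low$-variables and $\Sigma_1\Low\cup\Up\Up$ for $\Up$-variables. The main obstacle is the subcase where both $B$ and $C$ have status $\geq 2$: the desired middle $B_m\cdot U\cdot C_m$ is a ternary concatenation, while $\rhs'(\Up)$ admits only binary $\Up\,\Up$-productions. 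I resolve this with one auxiliary $\Up$-variable that first glues $B_m$ with $U$, so that $A_m$ can then concatenate the result with $C_m$ by a second $\Up\to\Up\,\Up$ production. This is the only place where the restricted grammar of $\Up$ requires genuine care; everything else is bookkeeping over the nine status combinations.
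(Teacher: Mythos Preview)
Your proposal is correct and follows essentially the same approach as the paper: a single bottom-up pass storing for each variable the left block $A_\ell\in\Low$, the last $\Sigma_1$-letter $\sigma_A$, the right block $A_r\in\Low$, and the middle block $A_m\in\Up$, with the concatenation case $\rhs(A)=BC$ handled by splicing at the seam via fresh variables $M\in\Low$ (with $\rhs'(M)=B_rC_\ell$) and $U\in\Up$ (with $\rhs'(U)=\sigma_B M$). The paper collapses your nine status-pair subcases into three (according to whether $\sigma_B,\sigma_C$ are $\varepsilon$) by unconditionally creating $A_m$ even when it evaluates to $\varepsilon$, which makes its construction slightly looser about the shape constraint $\rhs'(\Up)\subseteq\Sigma_1\Low\cup\Up\,\Up$ than yours; otherwise the two constructions coincide.
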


\begin{proof}
Let $G = (V, \rhs)$ be an SSLP over $\Sigma$, $\Sigma_1 \subseteq \Sigma$ and $\Sigma_2
= \Sigma \setminus \Sigma_1$. W.l.o.g. we can assume that $\rhs(V) \subseteq VV \cup \Sigma$.
 For every string $w \in \Sigma^*$ with $\Sigma_1$-factorization $w = v_0 a_1 v_1 \cdots a_n v_n$
 let $\kl{w}, \km{w}, \kr{w} \in \Sigma^*$ and $\ks{w} \in \Sigma_1 \cup \{\varepsilon\}$ be defined
 as follows:
 \begin{itemize}
 \item If $n = 0$ then $\kl{w} = v_0$ and $\km{w} = \kr{w} = \ks{w} = \varepsilon$.
 \item If $n > 0$ then $\kl{w} = v_0$, $\km{w} = a_1 v_1 \cdots a_{n-1} v_{n-1}$,
                        $\ks{w} = a_n$ and $\kr{w} = v_n$.
 \end{itemize}
 Note that in both cases $w = \kl{w} \km{w} \ks{w} \kr{w}$ and
 $\kl{w}, \km{w}, \ks{w}, \kr{w}$ satisfy the following equations:
 \begin{itemize}
 \item If $w = \varepsilon$ then $\kl{w} = \km{w} = \ks{w} = \kr{w} = \varepsilon$.
 \item If $w = a \in \Sigma_1$ then $\ks{w} = a$ and $\kl{w} = \km{w} = \kr{w} = \varepsilon$.
 \item If $w = b \in \Sigma_2$ then $\kl{w} = b$ and $\km{w} = \ks{w} = \kr{w} = \varepsilon$.
 \item If $w = uv$ with $u, v \in \Sigma^*$ then
   \begin{itemize}
   \item if $\ks{u} = \varepsilon$ then also $\km{u} = \kr{u} = \varepsilon$, hence
     $\kl{w} = \kl{u}\kl{v}$, $\km{w} = \km{v}$, $\ks{w} = \ks{v}$ and
     $\kr{w} = \kr{v}$,
   \item if $\ks{u} \in \Sigma_1$ and $\ks{v} = \varepsilon$ then also $\km{v} = \kr{v} = \varepsilon$,
     hence $\kl{w} = \kl{u}$, $\km{w} = \km{u}$, $\ks{w} = \ks{u}$ and
     $\kr{w} = \kr{u} \kl{v}$,
   \item if $\ks{u}, \ks{v} \in \Sigma_1$ then $\kl{w} = \kl{u}$,
     $\km{w} = \km{u} \ks{u} \kr{u} \kl{v} \km{v}$, $\ks{w} = \ks{v}$
     and $\kr{w} = \kr{v}$.
   \end{itemize}
 \end{itemize}
 We use these equations as a guideline for the construction of
 the $\Sigma_1$-factorization $G' = (V \uplus \Up \uplus \Low,
\rhs')$ of $G$. Take new variables $\Kl{A}, \Km{A}, \Kr{A}, U_{BC}, L_{BC} \notin V$ and
let
\begin{equation*}
\begin{split}
 \Up & = \{\Km{A} \mid A \in V\} \cup \{U_{BC} \mid BC \in \rhs(V)\}, \\
\Low & = \{\Kl{A}, \Kr{A} \mid A \in V\} \cup \{L_{BC} \mid BC \in \rhs(V)\} .
\end{split}
\end{equation*}
 For every $A \in V$ we define $\Ks{A} \in \Sigma_1 \cup \{\varepsilon\}$ and the right-hand
 sides of the new variables as follows:
 \begin{itemize}
 \item If $\rhs(A) = \varepsilon$ then
   $\rhs'(\Kl{A}) = \rhs'(\Km{A}) = \Ks{A} = \rhs'(\Kr{A}) = \eps$.
 \item If $\rhs(A) = a \in \Sigma_1$ then $\Ks{A} = a$ and
   $\rhs'(\Kl{A}) = \rhs'(\Km{A}) = \rhs'(\Kr{A}) = \eps$.
 \item If $\rhs(A) = b \in \Sigma_2$ then $\rhs'(\Kl{A}) = b$ and
   $\rhs'(\Km{A}) = \Ks{A} = \rhs'(\Kr{A}) = \eps$.
 \item If $\rhs(A) = BC$ then
   \begin{itemize}
   \item if $\Ks{B} = \eps$ then $\rhs'(\Kl{A}) = \Kl{B} \Kl{C}$,
     $\rhs'(\Km{A}) = \Km{C}$, $\Ks{A} = \Ks{C}$
     and $\rhs'(\Kr{A}) = \Kr{C}$,
   \item if $\Ks{B} \in \Sigma_1$ and $\Ks{C} = \eps$ then
     $\rhs'(\Kl{A}) = \Kl{B}$, $\rhs'(\Km{A}) = \Km{B}$,
     $\Ks{A} = \Ks{B}$ and $\rhs'(\Kr{A}) = \Kr{B}\Kl{C}$,
   \item if $\Ks{B},\Ks{C} \in \Sigma_1$ then $\rhs'(\Kl{A}) = \Kl{B}$,
     $\rhs'(\Km{A}) = \Km{B} U_{BC} \Km{C}$ with $\rhs'(U_{BC}) =
     \Ks{B}L_{BC}$ and $\rhs'(L_{BC}) = \Kr{B}\Kl{C}$, $\Ks{A} =
     \Ks{C}$ and $\rhs'(\Kr{A}) = \Kr{C}$.
   \end{itemize}
 \end{itemize}
 Finally we define the new right-hand side for every $A \in V$: If
 $\Ks{A} = \eps$ then $\rhs'(A) = \Kl{A} \in \Low$. If $\Ks{A} \in \Sigma_1$ and
 $\valXG{G'}{\Km{A}} = \eps$ then $\rhs'(A) = \Kl{A}\Ks{A}\Kr{A} \in \Low \Sigma_1
 \Low$. Otherwise $\rhs'(A) = \Kl{A}\Km{A}\Ks{A}\Kr{A} \in \Low\,\Up \Sigma_1 \Low$.

 A straightforward induction on the structure of the SSLP $G$
 shows that
 $\valXG{G'}{\Kl{A}} = \kl{w}$,
 $\valXG{G'}{\Km{A}} = \km{w}$,
 $\Ks{A} = \ks{w}$ and
 $\valXG{G'}{\Kr{A}} = \kr{w}$ whenever
 $\valXG{G}{A} = w$. From this and the definition of the new
 right-hand sides $\rhs'(A)$ we finally obtain $\valXG{G'}{A}
 = (\valXG{G'}{\Kl{A}}) (\valXG{G'}{\Km{A}}) \Ks{A} (\valXG{G'}{\Kr{A}})
 = \kl{w}\km{w}\ks{w}\kr{w} = w$.
\end{proof}

\subsection{\bf Normal form FSLPs.}
In this subsection, we introduce a normal form for FSLPs that turns out to be crucial in the rest
of the paper. An FSLP
$F=(V, S, \rhs)$ is in {\em normal form} if $V_0 = V_0^{\top} \uplus V_0^{\bot}$ and
all right-hand sides have one of the following forms:
\begin{itemize}
\item $\rhs(A) = \varepsilon$, where $A \in V_0^\top$,
\item $\rhs(A) = BC$, where $A \in V_0^\top, B, C \in V_0$,
\item $\rhs(A) = B \langle C \rangle$, where $B \in V_1$ and either
$A, C \in V_0^\bot$ or $A,C \in V_1$,
\item $\rhs(A) = \addroot{a}{B}$, where $A \in V_0^\bot$, $a \in \Sigma$ and $B \in V_0$,
\item $\rhs(A) = \addroot{a}{B x C}$, where $A \in V_1$, $a \in \Sigma$ and $B, C \in V_0$.
\end{itemize}
Note that the partition $V_0 = V_0^{\top} \uplus V_0^{\bot}$ is uniquely determined
by $\rhs$. Also note that variables from $V_1$ produce tree contexts and variables
from $V_0^{\bot}$ produce trees, whereas variables from $V_0^{\top}$ produce
forests with arbitrarily many trees.

Let $F=(V,S,\rho)$ be a normal form FSLP. Every variable $A \in V_1$ produces a vertical
concatenation of (possibly exponentially many) variables, whose right-hand sides have the form
$\addroot{a}{B x C}$. This vertical concatenation is called the spine of $A$. Formally,
we split $V_1$ into $V_1^{\top} = \{ A \in V_1 \mid \exists B, C \in V_1 :
 \rhs(A) = B\langle C\rangle  \}$ and
$V_1^{\bot} = V_1 \setminus V_1^{\top}$. We then define
the {\em vertical SSLP}
$\spineslp{F} = (V_1^{\top}, \rhs_1)$ over $V_1^{\bot}$ with
$\rhs_1(A) = BC$ whenever $\rhs(A) = B\langle C\rangle$.
For every $A \in V_1$ the string $\valXG{\spineslp{F}}{A} \in ( V_1^\bot)^*$
is called the {\em spine} of $A$ (in $F$), denoted by
$\gspine{F}{A}$ or just $\fspine(A)$ if $F$ is clear from the
context. We also define the {\em horizontal SSLP}
$\ribslp{F} = (V_0^{\top}, \rhs_0)$ over $V_0^{\bot}$, where
$\rhs_0$ is the restriction of $\rhs$ to $V_0^{\top}$. For every
$A \in V_0$ we use $\ftrees(A)$ to denote the string
$\valXG{\ribslp{F}}{A} \in ( V_0^{\bot})^*$. Note that $\fspine(A) = A$
(resp., $\ftrees(A) = A$) for every $A \in V_1^{\bot}$ (resp., $A \in V_0^\bot$).

The intuition behind the normal form can be explained as follows: Consider a tree context $t \in \mathcal{T}_1(\Sigma) \setminus \{x\}$.
By decomposing $t$ along the nodes on the unique path from the root to the $x$-labelled leaf,
we can write $t$ as a vertical concatenation of tree contexts $a_1 \auf f_1 x g_1 \zu, \ldots,
a_n \auf f_n x g_n \zu$ for forests $f_1, g_1, \ldots, f_n, g_n$ and symbols $a_1, \ldots, a_n$.
In a normal form FSLP one would produce $t$ by first deriving a vertical concatenation $A_1 \lan \cdots \lan A_n \ran \cdots \ran$.
Every $A_i$ is then derived to $\addroot{a_i}{B_i x C_i}$, where $B_i$ (resp., $C_i$) produces the forest $f_i$ (resp., $g_i$).
Computing an FSLP for this decomposition for a tree context that is already given by an FSLP is the main step in the proof of the normal
form theorem below.  Another insight is that proper forest contexts from
$\mathcal{F}_1(\Sigma) \setminus \mathcal{T}_1(\Sigma)$
can be eliminated
without significant size blow-up.

\begin{theorem}
\label{lemma:fslp_normal_form}
From a given FSLP $F$ one can construct in linear time an FSLP $F'$ in
normal form such that $\valX{F'} = \valX{F}$ and $|F'| \in O(|F|)$.
\end{theorem}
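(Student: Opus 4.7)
The plan is to transform $F$ in three linear-time passes, each of which introduces only a constant multiplicative blow-up, so that the total size of $F'$ is $O(|F|)$.

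In the first pass I would binarize right-hand sides and compute, by a bottom-up syntactic analysis, a type $\tau(A) \in \{\mathcal{F}_0, \mathcal{F}_1 \setminus \mathcal{T}_1, \mathcal{T}_1\}$ for every variable $A$ of $F$; any variable whose value equals $\varepsilon$ or $x$ is inlined. Next I would represent each variable $A$ of $F$ by new variables of $F'$: a single variable $\hat A \in V_0'$ when $A$ is a forest, a single variable $\hat A \in V_1'$ when $A$ is a tree context, and a triple $(A_L, A_M, A_R)$ with $A_L, A_R \in V_0'$ and $A_M \in V_1'$ when $A$ is a proper forest context, such that $\valX{A} = \valX{A_L} \conch \valX{A_M} \conch \valX{A_R}$. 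The right-hand sides of $F'$ are then defined by case analysis on $\rho(A)$: horizontal concatenation threads through whichever side carries the parameter, an application $\addroot{a}{B}$ either produces a single tree (when $B$ is a forest) or synthesizes the top atomic context $\addroot{a}{B_L x B_R}$ and prepends it to $B_M$'s spine (when $B$ is a context), and a vertical concatenation $B\langle C\rangle$ exploits the identity $\fspine(B \concv C) = \fspine(B) \cdot \fspine(C)$ whenever $C$ is itself a tree context or a single tree.

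The delicate case is $\rho(A) = B\langle C\rangle$ with $B$ a tree context and $C$ a multi-tree or empty forest, which the normal form forbids outright. To handle this, I would construct for every tree context variable its spine SSLP $\spineslp{F'}$ and apply Lemma~\ref{lemma:slp_split} so that the \emph{last} atomic context $\addroot{a}{DxE}$ of each spine is directly addressable by a dedicated $V_1^\bot$ variable and the rest of the spine by a $V_1'$ variable $B'$. With this factorization $B \concv C$ becomes $B' \concv \addroot{a}{D\,\hat C\, E}$, where $\addroot{a}{D\,\hat C\, E}$ is packaged as a single tree via an auxiliary $V_0^\top$ variable horizontally concatenating $D, \hat C, E$; if $B$'s spine consists of just one atomic context then $B'$ is empty and the whole rule collapses to $\rho'(\hat A) = \addroot{a}{D\,\hat C\, E}$. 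This fits the normal-form rule $\rho'(\hat A) = B'\langle T\rangle$ with $B' \in V_1$ and $T \in V_0^\bot$. An analogous device handles the symmetric case in which $C$ is a proper forest context whose left/right pieces would otherwise escape the spine.

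The main obstacle is the interaction between the passes: Lemma~\ref{lemma:slp_split} is applied to the spine SSLP of the intermediate FSLP, whose tree context variables are themselves produced by the forest-context elimination. I would therefore interleave the three passes, processing the variables of $F$ in topological order and maintaining for every newly created tree context variable both its spine SSLP and the factorized data (last atomic, rest) on the fly. Because only $O(1)$ new variables are introduced per original variable, and Lemma~\ref{lemma:slp_split} itself runs in linear time, the resulting FSLP $F'$ satisfies $|F'| \in O(|F|)$ and is constructed in linear time, establishing the theorem.
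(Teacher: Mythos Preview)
Your plan is workable and would yield a linear-time, linear-size normal form, but it is organized quite differently from the paper's proof. The paper first passes to a \emph{weak normal form} in which every horizontal concatenation is absorbed into $V_1$ via rules $\rhs(A)=BxC$ (so the atomic $V_1$-rules are either $\addroot{a}{x}$ or $BxC$), and then applies Lemma~\ref{lemma:slp_split} \emph{once} to the whole spine SSLP with $\Sigma_1=\Ver=\{A\mid\rhs(A)=\addroot{a}{x}\}$; each maximal $\Ver\Hor^*$-block $A_0A_1\cdots A_n$ of a spine is then rewritten in one shot as the normal-form atom $\addroot{a}{B_1\cdots B_n\,x\,C_n\cdots C_1}$. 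In contrast, you keep horizontal concatenation in $V_0$, split every proper forest context into a triple $(A_L,A_M,A_R)$, and invoke the factorization only in the degenerate form ``extract the last symbol of the spine'' to repair each individual rule $B\langle C\rangle$ with $C$ not a single tree. Both routes are linear; the paper's buys a single clean global pass and almost no case analysis, while yours avoids the weak-normal-form detour at the price of tracking (last, init) data for every tree-context variable on the fly and handling several interaction cases by hand. One small remark: your use of Lemma~\ref{lemma:slp_split} is overkill---you only need the last spine symbol and its prefix, which can be maintained directly by setting $\mathrm{last}(X)=\mathrm{last}(Z)$ and $\mathrm{init}(X)=Y\langle\mathrm{init}(Z)\rangle$ whenever $\rhs'(X)=Y\langle Z\rangle$; the paper's application of the lemma (with $\Sigma_1=\Ver$) is genuinely the full factorization and is where the real work happens.
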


\begin{proof}
To convert an FSLP to normal form,
we first introduce a {\em weak normal form},
where all right-hand sides have one of the following forms:
\begin{itemize}
\item $\rhs(A) = \varepsilon$, where $A \in V_0$,
\item $\rhs(A) = B\langle C\rangle$, where $A,C \in V_0$ and $B \in V_1$
\item $\rhs(A) = B\langle C\rangle$, where $A,B,C \in V_1$,
\item $\rhs(A) = \addroot{a}{x}$, where $A \in V_1$, $a \in \Sigma$,
\item $\rhs(A) = B x C$, where $A \in V_1$, $B, C \in V_0$.\label{item:siblings}
\end{itemize}
Converting an FSLP into weak normal form is straightforward:
By splitting up right-hand sides,  we can assume that
all right-hand sides have the form $\varepsilon, x, \addroot{a}{x}, B C$, or $B \langle C \rangle$
for $a \in \Sigma$, $B, C \in V$.  This transformation does not increase the size of the FSLP.
Right-hand sides of the form $\rho(A) = BC$, where w.l.o.g.~$B \in V_0$, can be replaced by
$\rho(A) = B' \langle C \rangle$ and $\rho(B') = B x$, where $B'$ is a new variable.

We may now assume that $F = (V,S,\rho)$ is in weak normal form.
Like we did with FSLPs in normal form, we
split $V_1$ into $V_1^{\top} = \{ A \in V_1 \mid \exists B, C \in V_1 :
 \rhs(A) = B\langle C\rangle  \}$ and
$V_1^{\bot} = V_1 \setminus V_1^{\top}$ and define
its spine SSLP as the SSLP
$\spineslp{F} = (V_1^{\top}, \rhs_1)$ over $V_1^{\bot}$ with
$\rhs_1(A) = BC$ whenever $\rhs(A) = B\langle C\rangle$.

Let $\Ver = \{ A \in V_1^{\bot} \mid \rhs(A) \text{ has the form } \addroot{a}{x} \}$
and $\Hor = V_1^{\bot} \setminus \Ver$. Thus, $\rho(A)$ has the form $BxC$ for $A \in \Hor$.
The idea of the construction is  to consider maximal factors of the form $A_0 A_1 \cdots A_n$
with $A_0 \in \Ver$ and $A_1, \ldots, A_n \in \Hor$ in $\valXG{\spineslp{F}}{A}$ (for some $A \in V_1^{\top}$).
In the FSLP $F$, such a factor corresponds to an iterated vertical concatenation
$A_0 \langle A_1 \langle \cdots \langle A_n \rangle \cdots \rangle \rangle$.
Assume that $\rho(A_0) = \addroot{a}{x}$ and $\rho(A_i) = B_i x C_i$ for $1 \leq i \leq n$.
Then, $A_0 \langle A_1 \langle \cdots \langle A_n \rangle \cdots \rangle \rangle$
can be rewritten into $\addroot{a}{B_1 B_2 \cdots B_n x C_n \cdots C_2 C_1}$.
We will introduce additional variables in order to produce the horizontal concatenations
$B_1 B_2 \cdots B_n$ and $C_n \cdots C_2 C_1$ and a variable with right-hand side $\addroot{a}{B x C}$.
Note that the latter form of right-hand sides is allowed in normal form FSLPs.

At this point, $\Ver$-factorizations turn out to be useful.
The maximal factors $A_0 A_1 \cdots A_n$ considered in Section~\ref{subsec:factorization} are explicitly generated
by the $\Ver$-factorization of the spine SSLP $\spineslp{F}$.
By Lemma~\ref{lemma:slp_split} we can
compute in linear time a $\Ver$-factorization
$G = (V_1^{\top} \uplus \Up \uplus \Low, \rhs_G)$ of $\spineslp{F}$ with $|G| \in O(|\spineslp{F}|) \leq O(|F|)$.
From $F$ and $G$ we obtain the FSLP
$F' = (V_0  \uplus \{A_\ell, A_r \mid A \in \Low\} \uplus \Up, S, \rhs')$
with new variables $A_\ell, A_r$ and  $\rhs'$
defined by:
\begin{enumerate}
  \item if $A \in \Low$ with $\rhs_G(A) = \varepsilon$
        then $\rhs'(A_\ell) = \rhs'(A_r) = \varepsilon$,
  \item \label{chain} if $A \in \Low$ with $\rhs_G(A) = B \in \Hor$ and $\rhs(B) = CxD$
        then $\rhs'(A_\ell) = C$ and $\rhs'(A_r) = D$,
  \item if $A \in \Low$ with $\rhs_G(A) = BC \in \Low \Low$
        then $\rhs'(A_\ell) = B_\ell C_\ell$ and $\rhs'(A_r) = C_rB_r$,
  \item if $A \in \Up$ with $\rhs_G(A) = BC \in \Ver \Low$ and $\rhs(B) = \addroot{a}{x}$
        then $\rhs'(A) = \addroot{a}{C_\ell xC_r}$,
  \item if $A \in \Up$ with $\rhs_G(A) = BC \in \Up \Up$
        then $\rhs'(A) = B \langle C \rangle$,
  \item if $A \in V_0$ with $\rhs(A) = \varepsilon$ then $\rhs'(A) = \eps$,
  \item if $A \in V_0$ with $\rhs(A) = B \langle A_0 \rangle$, $B \in V_1^{\bot}$ and $\rho(B) = \addroot{a}{x}$ then
     $\rhs'(A) = a \auf A_0 \zu$,
   \item \label{length3} if $A \in V_0$ with $\rhs(A) = B \langle A_0 \rangle$, $B \in V_1^{\bot}$ and $\rho(B) = C x D$ then
     $\rhs'(A) = C A_0 D$,
  \item \label{L} if $A \in V_0$ with $\rhs(A) = B \langle A_0 \rangle$, $B \in V_1^{\top}$ and  $\rhs_G(B) = C \in \Low$
          then $\rhs'(A) = C_\ell A_0C_r$,
  \item \label{LVL} if $A \in V_0$ with $\rhs(A) = B \langle A_0 \rangle$, $B \in V_1^{\top}$, $\rhs_G(B) = CDE \in \Low\,\Ver \Low$ and
  $\rhs(D) = \addroot{a}{x}$
          then $\rhs'(A) = C_\ell \addroot{a}{E_\ell A_0 E_r} C_r$,
  \item \label{LUVL} if $A \in V_0$ with $\rhs(A) = B \langle A_0 \rangle$, $B \in V_1^{\top}$, $\rhs_G(B) = CDD'E \in \Low\,\Up \Ver \Low$
             and $\rhs(D') = \addroot{a}{x}$
          then $\rho'(A) = C_\ell D \langle \addroot{a}{E_\ell A_0 E_r} \rangle C_r$.
 \end{enumerate}
 Note that this FSLP is not in normal form, but by further splitting up $\rho'(A)$ in points~\ref{length3}--\ref{LUVL} (and eliminating the ``chain
 definitions'' in point~\ref{chain}), we can
 obtain normal form.
 For instance, in point~\ref{LUVL}, we have to introduce new variables $A_1, \ldots, A_5$ and set
 $\rhs'(A) = A_1 C_r$,  $\rhs'(A_1) = C_\ell A_2$, $\rhs'(A_2) = D \langle A_3 \rangle$, $\rhs'(A_3) = a \langle A_4 \rangle$,
 $\rhs'(A_4) = A_5 E_r$, and $\rhs'(A_5) = E_\ell A_0$.
An easy induction on the partial order of the dag
shows that
\begin{itemize}
\item
      if $B \in \Low$ with $\valXG{\spineslp{F}}{B} = H_1 \cdots H_n \in \Hor^*$
      then $\valXG{F}{H_1 \langle \cdots \langle H_n \rangle \cdots \rangle} = \valXG{F'}{B_\ell xB_r}$,
\item
       if $A \in V_0$ then $\valXG{F}{A} = \valXG{F'}{A}$.
\end{itemize}
From the last point we finally obtain
$\valX{F} = \valXG{F}{S} = \valXG{F'}{S} = \valX{F'}$.
\end{proof}

\section{Cluster algebras and top dags} \label{sec-top-dag}

In this section we introduce top dags~\cite{BilleGLW15,Hubschle-Schneider15}
as an alternative grammar-based formalism for the compression of unranked trees.
A {\em cluster of rank $0$} is a tree $t \in \mathcal{T}_0(\Sigma)$ of size at least two.
A {\em cluster of rank $1$} is a tree $t \in \mathcal{T}_0(\Sigma)$ of size at least two together with
a distinguished leaf node that we call the {\em bottom boundary node} of $t$.
In both cases, the root of $t$ is called the {\em top boundary node} of $t$.
Note that in contrast to forest contexts there is no parameter $x$. Instead, one of the $\Sigma$-labelled
leaf nodes may be declared as the bottom boundary node. When writing a cluster of rank $1$ in term representation,
we underline the bottom boundary node. For instance $a \auf b \, c \auf \underline{a} \, b \zu \zu$
is a cluster of rank $1$. An {\em atomic cluster} is of the form $a \auf b \zu$ or $a \auf \underline{b} \zu$ for $a,b \in \Sigma$.
Let $\mathcal{C}_i(\Sigma)$ be the set of all clusters of rank $i \in \{0,1\}$
and let
$\mathcal{C}(\Sigma) = \mathcal{C}_0(\Sigma) \cup \mathcal{C}_1(\Sigma)$.
We write $\rank(s) = i$ if $s \in \mathcal{C}_i(\Sigma)$ for $i \in \{0,1\}$.
We define the {\em cluster algebra}
$\mathsf{CA}(\Sigma) = ( \mathcal{C}(\Sigma), \mergh, \mergv, (a \auf b \zu, a \auf \underline{b} \zu)_{a,b \in \Sigma})$
as follows:
\begin{itemize}
\item $\mergh$  is the horizontal merge operator: $s \mergh t$ is only defined if
$\rank(s)+\rank(t)\leq 1$ and $s,t$ are of the form
$s = a \auf f \zu$, $t = a \auf g \zu$, i.e., the root labels coincide. Then
$s \mergh t = a \auf f g \zu$. Note that at most one symbol in the forest $fg$ is underlined.
The rank of $s \mergh t$ is $\rank(s)+\rank(t)$. For instance,
$a \auf b \, c \auf \underline{a} \, b \zu \zu \mergh a \auf b \, c \zu = a \auf b \, c \auf \underline{a} \, b \zu b \, c \zu$.
\item $\mergv$ is the vertical merge operator: $s \mergv t$ is only defined if $s \in \mathcal{C}_1(\Sigma)$ and
the label of the root of $t$ (say $a$) is equal to the label of the bottom boundary node of $s$. We then
obtain $s \mergv t$ by replacing the unique occurrence of $\underline{a}$ in $s$ by $t$.
The rank of $s \mergv t$ is $\rank(t)$. For instance, $a \auf b \, c \auf \underline{a} \, b \zu \zu \mergv a \auf b \underline{c} \zu =
a \auf b \, c \auf a \auf b \underline{c} \zu  \, b \zu \zu$.
\item The atomic clusters $a \auf b \zu$ and $a \auf \underline{b} \zu$ are constants of the  cluster algebra.
\end{itemize}
A {\em top tree} for a tree $t \in \mathcal{T}_0$ is an algebraic expression $e$ over the algebra  $\mathsf{CA}(\Sigma)$
such that $\valX{e} = t$.
A {\em top dag} over $\Sigma$ is a straight-line program $D$ over the algebra  $\mathsf{CA}(\Sigma)$
such that $\valX{D} \in \mathcal{T}_0(\Sigma)$.
In our terminology, cluster straight-line program would be a more appropriate name,
but we prefer to call them top dags.

\begin{example}
Consider the top dag $D = (\{S,A_0, \ldots, A_n,B_0, \ldots, B_n\},S,\rho)$, where
$\rho(A_0) = b \auf a \zu$, $\rho(A_i) = A_{i-1} \mergh A_{i-1}$ for $1 \leq i \leq n$,
$\rho(B_0) = A_n \mergh b\auf\underline{b}\zu \mergh A_n$, $\rho(B_i) = B_{i-1} \mergv B_{i-1}$
for $1 \leq i \leq n$, and $\rho(S) = B_n \mergv b\auf c\zu$.
We have
$\valX{D} = b \auf a^{2^n} b \auf a^{2^n} \cdots b \auf a^{2^n} b \auf c \zu \, a^{2^n} \zu \cdots a^{2^n} \zu a^{2^n} \zu$,
where $b$ occurs $2^n+1$ many times.
\end{example}
\section{Relative succinctness} \label{sec-succinctness}

We have now three grammar-based formalisms for the compression of unranked trees:
FSLPs, top dags, and TSLPs for fcns-encodings. In this section we study their relative
succinctness. It turns out that up to multiplicative factors of size $|\Sigma|$ (number of node labels)
all three formalisms are equally succinct. Moreover, the transformations between the formalisms
can be computed very efficiently. This allows us to transfer algorithmic results for
FSLPs to top dags and TSLPs for fcns encodings, and vice versa.
We start with top dags:

\begin{proposition} \label{lemma-top-dag-transform}
For a given top dag $D$ one can compute in linear time an FSLP $F$ such that
 $\valX{F} = \valX{D}$ and $|F| \in O(|D|)$.
\end{proposition}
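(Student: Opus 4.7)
The plan is to associate to each top dag variable $A$ two FSLP variables: $A^{\text{tree}}$ encoding the cluster $\valXG{D}{A}$ as an element of $\mathcal{T}(\Sigma)$, and $A^{\text{for}}$ encoding the forest of children of the root of that cluster. The key observation is that a rank-$0$ cluster already is a tree in $\mathcal{T}_0(\Sigma)$, while a rank-$1$ cluster with bottom boundary labelled $b$ corresponds naturally to the tree context in $\mathcal{T}_1(\Sigma)$ obtained by replacing the underlined $\underline{b}$ by the FSLP parameter $x$. The label $b$ is discarded: in the top dag it only served to check the precondition of vertical merge, which is guaranteed by validity of $D$. As a preprocessing step I would compute in one linear bottom-up pass the root label $a_A \in \Sigma$ of $\valXG{D}{A}$ for every $A$; this is immediate for atomic clusters, inherited from either operand for $\mergh$, and from the left operand for $\mergv$.

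Given the labels $a_A$, the right-hand sides of the new FSLP variables are defined by case analysis on $\rho(A)$. For atomic $\rho(A) = a\auf b\zu$, set $\rhs'(A^{\text{for}}) = b$ and $\rhs'(A^{\text{tree}}) = \addroot{a}{A^{\text{for}}}$; for atomic $\rho(A) = a\auf \underline{b}\zu$, set $\rhs'(A^{\text{for}}) = x$ and $\rhs'(A^{\text{tree}}) = \addroot{a}{A^{\text{for}}}$. For $\rho(A) = B \mergh C$, set $\rhs'(A^{\text{for}}) = B^{\text{for}} C^{\text{for}}$ and $\rhs'(A^{\text{tree}}) = \addroot{a_A}{A^{\text{for}}}$; the precondition $\rank(B) + \rank(C) \leq 1$ of $\mergh$ ensures that at most one of $B^{\text{for}}, C^{\text{for}}$ lies in $\mathcal{F}_1(\Sigma)$, so $\conch$ is defined. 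For $\rho(A) = B \mergv C$, set $\rhs'(A^{\text{tree}}) = B^{\text{tree}} \lan C^{\text{tree}} \ran$ and $\rhs'(A^{\text{for}}) = B^{\text{for}} \lan C^{\text{tree}} \ran$; since $B$ has rank $1$, both $B^{\text{tree}}$ and $B^{\text{for}}$ lie in $\mathcal{F}_1(\Sigma)$ and $\concv$ is defined. The start variable of $F$ is $S^{\text{tree}}$, where $S$ is the start variable of $D$; since $\valX{D} \in \mathcal{T}_0(\Sigma)$, the resulting FSLP is valid.

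Correctness follows by a simultaneous induction along the partial order $E(D)^*$, showing that $\valXG{F}{A^{\text{tree}}}$ equals $\valXG{D}{A}$ under the replacement $\underline{b} \mapsto x$ at the bottom boundary (if any), and that $\valXG{F}{A^{\text{for}}}$ is the forest of children of the root of that tree or tree context. The atomic cases are immediate; the $\mergh$ step matches $a\auf f\zu \mergh a\auf g\zu = a\auf fg\zu$ against $\addroot{a_A}{B^{\text{for}} C^{\text{for}}}$; the $\mergv$ step is exactly the substitution semantics of $\concv$ for tree contexts. Each top dag variable contributes two FSLP variables with right-hand sides of constant size, so $|F| \in O(|D|)$ and the construction is clearly linear time. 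The only point that requires real thought is the necessity of the auxiliary variable $A^{\text{for}}$: a one-to-one translation would fail, because $B \mergh C$ must splice the children of the roots of $B$ and $C$ under a common root label, and no FSLP operation exposes the children of the root of an already constructed tree. Carrying the children-forest explicitly alongside the tree is therefore the central design choice.
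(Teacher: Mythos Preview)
Your proof is correct and follows essentially the same idea as the paper: both translate each top dag variable to an FSLP variable encoding the forest of children of the cluster's root, so that $\mergh$ becomes $\conch$ and $\mergv$ becomes $\concv$. The one technical difference is in how the bottom boundary node is handled. You replace an underlined leaf $\underline{b}$ by the bare parameter $x$, discarding the label $b$; to compensate, you carry the second family of variables $A^{\text{tree}}$ so that in a vertical merge you can substitute the full tree $C^{\text{tree}}$ (which restores the missing root label) into $B^{\text{for}}$. The paper instead replaces $\underline{b}$ by the subtree $b(x)$, keeping the label; then the children-forest variable alone suffices, because substituting $\varphi(t)$ (the children of $t$) into that $b(x)$ already rebuilds $t$. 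This makes the paper's construction one-to-one (plus a single new start variable), whereas yours is two-to-one; both are $O(|D|)$ and linear time, so the difference is purely a constant factor and a matter of elegance.
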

\begin{proof}
For $t \in \mathcal{T}(\Sigma)$ we denote  with $\fchop(t)$ the forest obtained
by removing from $t$ the root node.
Translating a cluster with a bottom boundary node to a tree with a parameter is done by
the function $\bx \colon \mathcal{C}_1(\Sigma) \to \mathcal{T}_1(\Sigma)$, where $\bx(t)$
replaces the bottom boundary node in $t$ labelled with $a \in \Sigma$ by the tree $a \auf x \zu$.
We translate a cluster to a forest by $\varphi \colon \mathcal{C}(\Sigma) \to \mathcal{F}(\Sigma)$,
where $\varphi(t) = \fchop(t)$ for $t \in \mathcal{C}_0(\Sigma)$ and $\varphi(t) = \fchop(\bx(t))$ for $t \in \mathcal{C}_1(\Sigma)$.
Then the following identities hold:
\begin{align}
\varphi(s) \conch \varphi(t) &= \varphi(s \mergh t) \label{phi1}\\
\varphi(s) \concv \varphi(t) &= \varphi(s \mergv t) \label{phi2}\\
\varphi( a \auf b \zu ) & = b \label{phi3}\\
\varphi( a \auf \underline{b} \zu ) & = b \auf x \zu \label{phi4}
\end{align}
Let $D=(V,S,\rhs)$ be a top dag and let $\alpha$ be the label of the root of $\valX{D}$, which can
be easily computed in linear time.
We define $F = (V \uplus \{ S' \}, S', \rhs')$, such that for every $A \in V$ we have
$\valXG{F}{A} = \varphi(\valXG{D}{A})$.
We set $\rhs'(S') = \addroot{\alpha}{S}$, which yields
\[
  \valX{F} = \valXG{F}{S'}
  = \valXG{F}{\addroot{\alpha}{S}}
  = \alpha \auf \valXG{F}{S} \zu
  = \alpha \auf \fchop(\valXG{D}{S}) \zu
  = \valXG{D}{S} = \valX{D}.
\]
We translate the right-hand sides of the top dag as follows:
\begin{itemize}
\item if $\rhs(A) = a \auf b \zu$ then $\rhs'(A) = b$,
\item if $\rhs(A) = a \auf \underline{b} \zu$ then $\rhs'(A) = \addroot{b}{x}$,
\item if $\rhs(A) = B \mergh C$ then $\rhs'(A) = B \conch C$,
\item if $\rhs(A) = B \mergv C$ then $\rhs'(A) = B \concv C$.
\end{itemize}
Then $\valXG{F}{A} = \varphi(\valXG{D}{A})$ for all $A \in V$ follows immediately from \eqref{phi1}--\eqref{phi4}.
\end{proof}

\begin{proposition}\label{lemma:top-dag-transform-rev}
For a given FSLP $F$ with $\valX{F} \in \mathcal{T}_0(\Sigma)$ and $|\valX{F}| \geq 2$
one can compute in time $O(|\Sigma| \cdot |F|)$ a top dag $D$ such that
$\valX{D} = \valX{F}$ and $|D| \in O(|\Sigma| \cdot |F|)$.
\end{proposition}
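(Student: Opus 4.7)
The plan is to first apply Theorem~\ref{lemma:fslp_normal_form} to bring $F$ into normal form (in linear time, with resulting size $O(|F|)$), and then to build $D$ by introducing, for every variable of the normal form FSLP $F' = (V, S, \rhs)$, at most $2\sigma + 1$ top dag variables, where $\sigma = |\Sigma|$. A linear-time bottom-up pass first computes the root label $\rootof{A}$ of $\valXG{F'}{A}$ for every $A \in V_0^\bot \cup V_1$, and, for every $A \in V_0$, the information whether $\valXG{F'}{A}$ is empty, a single node, or has size at least two. Using this information the construction will produce a top dag of size $O(\sigma \cdot |F|)$ in time $O(\sigma \cdot |F|)$.

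The top dag variables have the following intended meanings. For each $A \in V_0$ with $\valXG{F'}{A} \neq \varepsilon$ and each $a \in \Sigma$, the variable $[A,a]$ produces the rank-$0$ cluster $a\auf \valXG{F'}{A} \zu$. For each $A \in V_1$ and each $a \in \Sigma$, the variable $[A,a]$ produces the rank-$1$ cluster obtained from $\valXG{F'}{A}$ by replacing $x$ with $\underline{a}$, while an additional variable $[A,a]^{\circ}$ produces the rank-$0$ cluster obtained from $\valXG{F'}{A}$ by replacing $x$ with the leaf $a$. For each $A \in V_0^\bot$ with $|\valXG{F'}{A}| \geq 2$, the variable $\hat{A}$ produces $\valXG{F'}{A}$ itself. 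The right-hand sides are defined by case analysis on $\rhs(A)$: for $\rhs(A) = BC$ in $V_0^\top$ with both summands nonempty, $[A,a] = [B,a] \mergh [C,a]$ (degenerate cases are handled by aliasing $[A,a]$ to $[B,a]$ or $[C,a]$); for $\rhs(A) = \addroot{a_0}{B}$ in $V_0^\bot$, $\hat{A} = [B,a_0]$ when $\valXG{F'}{B} \neq \varepsilon$ and $[A,a] = a\auf \underline{a_0}\zu \mergv \hat{A}$, with the atomic definition $[A,a] = a\auf a_0\zu$ otherwise; for $\rhs(A) = \addroot{a_0}{B x C}$ in $V_1$, both $[A,a]$ and $[A,a]^{\circ}$ are obtained by horizontally merging (where defined) $[B,a_0]$, the atomic cluster $a_0\auf \underline{a}\zu$ respectively $a_0\auf a\zu$, and $[C,a_0]$; for $\rhs(A) = B\langle C\rangle$ in $V_1$, we set $[A,a] = [B,\rootof{C}] \mergv [C,a]$ and $[A,a]^{\circ} = [B,\rootof{C}] \mergv [C,a]^{\circ}$.

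The main obstacle is the remaining case $\rhs(A) = B\langle C\rangle$ with $A, C \in V_0^\bot$: the naive recipe $\hat{A} = [B,\rootof{C}] \mergv \hat{C}$ works only when $|\valXG{F'}{C}| \geq 2$, since otherwise $\hat{C}$ is not a valid cluster. This is precisely the motivation for introducing the rank-$0$ variants $[B,c]^{\circ}$ for $B \in V_1$: when $\valXG{F'}{C}$ is the single leaf $c = \rootof{C}$ we instead set $\hat{A} = [B,c]^{\circ}$, which evaluates to $\valXG{F'}{B} \concv c = \valXG{F'}{A}$; in either subcase, $[A,a] = a\auf\underline{\rootof{A}}\zu \mergv \hat{A}$. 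For the output, since $\valX{F}$ is a tree of size at least two, a linear-time preprocessing lets us assume $S \in V_0^\bot$ (any $V_0^\top$ right-hand side $\rhs(S) = BC$ that produces a single tree has exactly one empty summand, and we follow the nonempty one down until reaching a $V_0^\bot$ variable); the top dag $D$ then takes $\hat{S}$ as its start variable. Correctness follows by a straightforward induction along the dag order of $F'$ using the defining identities of the cluster operations, while the size and time bounds are immediate from the fact that each variable of $F'$ contributes only $O(\sigma)$ top dag variables with right-hand sides of constant size.
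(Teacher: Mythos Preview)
Your proof is correct and follows essentially the same approach as the paper: both bring $F$ to normal form and then introduce $O(|\Sigma|)$ top-dag variables per FSLP variable with exactly the same invariants (your $[A,a]$ for $A\in V_0$ is the paper's $A^a$ with value $a\auf\valXG{F}{A}\zu$, your $[A,a]$ for $A\in V_1$ is the paper's $A_a$ with value $\psi_a(\valXG{F}{A})$, and your $\hat A$ is the paper's $A\in U_0^\bot$). The only divergence is in the corner case $\rhs(A)=B\langle C\rangle$ with $\valXG{F}{C}$ a single leaf: the paper first eliminates $\varepsilon$-rules so that such $C$ satisfy $\rho(C)=c\in\Sigma$ and handles them ad hoc, whereas you add a parallel rank-$0$ family $[B,a]^{\circ}$ for $B\in V_1$; this is a minor implementation difference that does not affect the structure or the $O(|\Sigma|\cdot|F|)$ bound.
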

\begin{proof}
For every $a \in \Sigma$ we define the mapping $\psi_a \colon \mathcal{T}_1(\Sigma) \setminus \{x\}
\to \mathcal{C}_1(\Sigma)$ as follows:
for $t \in \mathcal{T}_1(\Sigma)$, $t \neq x$, let $\psi_a(t)$ be the rank-1 cluster obtained from
replacing in $t$ the label of the unique $x$-labelled node (which is not the root) by $a$ and declaring this node
as the bottom-boundary node.
Then, the following identities are obvious, where $s,t \in \mathcal{T}_1(\Sigma) \setminus \{x\}$, $u \in \mathcal{T}_0(\Sigma)$,
$|u| \geq 2$, and $b \in \Sigma$ is the label of the roots of $t$ and $u$:
\begin{align} \label{psi_a-1}
\psi_a( s \langle t \rangle ) &= \psi_b(s) \mergv \psi_a(t) \\
\label{psi_a-2}
s \langle u \rangle &= \psi_b(s) \mergv u
\end{align}
Moreover, for all forests $f,g \in \mathcal{F}_0(\Sigma)$ with $f \neq \varepsilon \neq g$ we have
\begin{equation}
\label{morph-a()}
a\auf  f g \zu = a \auf f \zu \mergh a \auf g \zu
\end{equation}
Let us now come to the construction for $T$.
By Theorem~\ref{lemma:fslp_normal_form} we can assume that the input FSLP $F = (V,S,\rho)$  is in normal form.
We can easily eliminate right-hand sides of the form $\varepsilon$ without a size increase.
This might lead to ``chain definitions'' of the form $\rho(A)=B$ which can be also eliminated without size increase.
After this preprocessing step, we may have also  right-hand sides of the form $\rho(A) = a \in \Sigma$ (with $A \in V_0^\bot$),
$\rho(A) = \addroot{a}{x}$, $\rho(A) = \addroot{a}{B x}$ (with $B \in V_0$), and $\rho(A) = \addroot{a}{x C}$ (with $C \in V_0$).
We still denote the resulting FSLP with $F$.
Since we started with an FSLP in normal form, we have $\valXG{F}{A} \in \mathcal{T}_0(\Sigma)$  for every $A \in V_0^\bot$ and
$\valXG{F}{A} \in \mathcal{T}_1(\Sigma) \setminus \{x\}$  for every $A \in V_1$. Hence,
for $A \in V_0^\bot \cup V_1$ we can define $\alpha_A \in \Sigma$ as the label of the root node in the tree (context) $\valXG{F}{A}$.
Also note that every forest $\valXG{F}{A}$ for $A \in V_0$ has size at least
one. Moreover, if $A \in V_0^\bot$ and $\rho(A) \not\in \Sigma$ then
the tree $\valXG{F}{A}$ has size at least two. Let $U_0^\bot = \{ A \in V_0^\bot \mid \rho(A) \not\in \Sigma \}$.

We define a top dag $D = (V', S, \rho')$, where $V' = V'_0 \cup V'_1$ with
\begin{align*}
V'_0 &=   U_0^\bot  \uplus \{ A^a \mid A \in V_0, a \in \Sigma \} \\
V'_1 &= \{ A_a \mid A \in V_1, a \in \Sigma \} .
\end{align*}
We will define the right-hand side mapping $\rho'$ of $D$ such that the following identities hold:
\begin{enumerate}[(i)]
\item $\valXG{D}{A} = \valXG{F}{A}$ for every $A \in U_0^\bot$,
\item $\valXG{D}{A^a} = a \auf \valXG{F}{A} \zu$ for every $A \in V_0$,
\item $\valXG{D}{A_a} = \psi_a(\valXG{F}{A})$ for every $A \in V_1$.
\end{enumerate}
In order to obtain these identities, we define  $\rho'$ as follows:
\begin{itemize}
\item if $\rho(A) = BC$ for $A,B,C \in V_0$
then $\rho'(A^a) = B^a \mergh C^a$,
\item if $A \in U_0^\bot$ then $\rho'(A^a) =  a \auf \underline{\alpha_A} \zu \mergv A$,
\item if $\rho(A) = b \in \Sigma$ then $\rho'(A^a) = a \auf b \zu$,
\item if $\rho(A) = \addroot{a}{B}$ (hence $A \in U_0^\bot$) then $\rho'(A) = B^a$,
\item if $\rho(A) = B \langle C \rangle$ for $A,C \in U_0^\bot$ and $B \in V_1$ then
 $\rho'(A) = B_{\alpha_C} \mergv C$,
 \item if $\rho(A) = B \langle C \rangle$, $\rho(C) = a \in \Sigma$ and $C \in V_1$ (hence $A \in U_0^\bot$)  then
 $\rho'(A) = B_a$,
 \item if $\rho(A) = B \langle C \rangle$ for $A,B,C \in V_1$ then
 $\rho'(A_a) = B_{\alpha_C} \langle C_a \rangle$,
\item  if $\rho(A) = \addroot{b}{B x C}$ for $A \in V_1$, $B,C \in V_0$ then
$\rho'(A_a) = B^b \mergh b \auf a \zu \mergh C^b$,
\item  if $\rho(A) = \addroot{b}{B x}$ for $A \in V_1$, $B \in V_0$ then
$\rho'(A_a) = B^b \mergh b \auf a \zu$,
\item  if $\rho(A) = \addroot{b}{x C}$ for $A \in V_1$, $C \in V_0$ then
$\rho'(A_a) = b \auf a \zu \mergh C^b$,
\item  if $\rho(A) = \addroot{b}{x}$ for $A \in V_1$ then
$\rho'(A_a) =b \auf a \zu$.
\end{itemize}
The correctness of this construction follows easily by induction, using \eqref{psi_a-1}--\eqref{morph-a()}.

To conclude the proof, note that since $\valX{F}$ is a tree of size two, the start symbol $S$ of $F$ must belong
to $U_0^\bot$. Hence, the above point ({\it i}\/) implies $\valX{D} = \valX{F}$.
\end{proof}
The following example shows that the size bound in Proposition~\ref{lemma:top-dag-transform-rev}
is sharp:

\begin{example} \label{example-sigma}
  Let $\Sigma = \{a, a_1,...,a_\sigma\}$ and let $t_{n} = a \auf a_1 \auf,
  a^m \zu \cdots a_\sigma \auf a^m \zu \zu$
  where $n \geq 1$  and $m = 2^n$.  For every $n > \sigma$ the tree
  $t_n$ can be produced by an FSLP of size $O(n)$: using $n = \log m$
  many variables we can produce the forest $a^{m}$ and then $O(n)$
  many additional variables suffice to produce $t_{n}$. On the other
  hand, every top dag for $t_{n}$ has size $\Omega(\sigma\cdot n)$: consider a top
  tree $e$ that evaluates to $t_{n}$. Then $e$ must contain a
  subexpression $e_i$ that evaluates to the subtree $a_i\auf a^m\zu$ ($1 \leq
  i \leq \sigma$) of $t_{n}$. The subexpression $e_i$ has to produce $a_i\auf
  a^m\zu$ using the $\mergh$-operation from copies of $a_i \auf a \zu$.
  Hence, the expression for $a_i \auf a^m\zu$ has size $n = \log_2 m$ and
  different $e_i$ contain no identical subexpressions. Therefore every
  top dag for $t_n$ has size at least $\sigma\cdot n$.
 \end{example}
In contrast, FSLPs and TSLPs for fcns-encodings turn out to be equally succinct up to constant factors:

\begin{proposition} \label{lemma-fcns-transform}
Let $f \in \mathcal{F}(\Sigma)$ be a forest and let $F$ be an FSLP (or TSLP) over $\Sigma \uplus \{\bot\}$
 with $\valX{F} = \ffcns(f)$. Then we can transform $F$ in linear time
 into an FSLP $F'$ over $\Sigma$ with $\valX{F'} = f$ and $|F'| \in O(|F|)$.
\end{proposition}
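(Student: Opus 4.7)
My plan is to reduce to a TSLP and then perform a local syntactic translation driven by the inverse of $\ffcns$. Apply Theorem~\ref{lemma:fslp_normal_form} to put $F = (V,S,\rho)$ into normal form in linear time and discard non-reachable variables. Because $\valX{F} = \ffcns(f)$ is a binary tree whose internal nodes are $\Sigma$-labelled with exactly two children and whose leaves are labelled $\bot$, a bottom-up walk through the dag establishes the invariants that every reachable $V_0$-variable produces a forest of at most two trees, that a right-hand side $\rho(A) = \addroot{\bot}{B}$ forces $\valXG{F}{B} = \varepsilon$, that a right-hand side $\rho(A) = \addroot{a}{B}$ with $a \in \Sigma$ forces $\valXG{F}{B}$ to be a forest of exactly two trees, and that in $\rho(A) = \addroot{a}{BxC}$ exactly one of $B, C$ is empty while the other produces a single tree. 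A single linear-time pass then splits each two-tree variable $V$ into single-tree variables $V_\ell, V_r$ with $\valXG{F}{V} = \valXG{F}{V_\ell}\,\valXG{F}{V_r}$; after eliminating chain and $\varepsilon$-rules I obtain a TSLP $T$ of size $O(|F|)$ with $\valX{T} = \ffcns(f)$.

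\textbf{Decoding $\ffcns$.} Define $\psi \colon \mathcal{F}(\Sigma \uplus \{\bot\}) \to \mathcal{F}(\Sigma)$ by $\psi(\varepsilon) = \varepsilon$, $\psi(\bot) = \varepsilon$, $\psi(x) = x$, $\psi(fg) = \psi(f)\,\psi(g)$, and $\psi(\addroot{a}{t_1 t_2}) = \addroot{a}{\psi(t_1)}\,\psi(t_2)$ for $a \in \Sigma$ and binary trees $t_1, t_2$. An induction on $g$ gives $\psi(\ffcns(g)) = g$, and a second induction on $s$ (splitting on whether the $x$-leaf of $s$ lies in its left or its right binary subtree) yields the commutation lemma $\psi(s\langle t\rangle) = \psi(s)\langle \psi(t)\rangle$ whenever the substitution is defined.

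\textbf{Translation.} For each variable $A$ of $T$ introduce a variable $A'$ in $F'$ and set $\rho'(A') = \varepsilon$ if $\rho_T(A) = \bot$; $\rho'(A') = \addroot{a}{B'}\,C'$ if $\rho_T(A) = \addroot{a}{BC}$; $\rho'(A') = \addroot{a}{x}\,B'$ if $\rho_T(A) = \addroot{a}{xB}$; $\rho'(A') = \addroot{a}{B'}\,x$ if $\rho_T(A) = \addroot{a}{Bx}$; and $\rho'(A') = B'\langle C'\rangle$ if $\rho_T(A) = B\langle C\rangle$. An induction along the acyclic variable order of $T$, using the commutation lemma, shows $\valXG{F'}{A'} = \psi(\valXG{T}{A})$ for every $A$; in particular $\valX{F'} = \psi(\ffcns(f)) = f$. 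Since each right-hand side of $F'$ uses a bounded number of forest-algebra operations, $|F'| \in O(|T|) = O(|F|)$ and the construction runs in linear time.

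\textbf{Main obstacle.} The only subtle step is the reduction from a normal-form FSLP to a TSLP: establishing the ``at most two trees'' invariant for every reachable variable and realising the split of two-tree variables into single-tree ones within the original size budget. Once $T$ is in hand, the remainder is mechanical and correctness drops out of the two properties of $\psi$ above.
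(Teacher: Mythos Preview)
Your proposal is correct and follows essentially the same approach as the paper: first reduce the normal-form FSLP to a TSLP using the fact that $\ffcns(f)$ is a binary tree (so every $V_0$-variable produces at most two trees), then apply exactly the same local rewriting of right-hand sides ($\bot \mapsto \varepsilon$, $\addroot{a}{BC} \mapsto \addroot{a}{B}C$, etc.). The only cosmetic differences are that the paper realises the TSLP reduction by computing each short string $\ftrees(A)$ and inlining it (rather than your splitting into $V_\ell,V_r$), and that the paper proves correctness in the forward direction---showing $\ffcns(\valXG{F'}{A}) = \valXG{F}{A}$ and $\ffcns(\valXG{F'}{A}\langle g\rangle) = \valXG{F}{A}\langle \ffcns(g)\rangle$---rather than via your explicit inverse $\psi$ and commutation lemma; these are dual formulations of the same invariant.
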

\begin{proof}
  Let $F = (V, S, \rhs)$ be an FSLP over $\Sigma \cup \{\bot\}$. By Theorem~\ref{lemma:fslp_normal_form},
  we may assume that $F$ is in
  normal form and every variable is reachable from
  $S$. This implies $|\ftrees(A)| \leq 2$ for every $A \in V_0$, because
  $\ffcns(f)$ is a binary tree. Hence we can compute the strings
  $\ftrees(A) = \valXG{\ribslp{F}}{A} \in ( V_0^\bot)^*$ with $A \in V_0^\top$
  all together in linear time, substitute $\ftrees(A)$ for each
  occurrence of $A$ in the right-hand sides, and finally erase the
  production for $A$. In particular, right-hand sides of the form $\varepsilon$ and
  $BC$ do not occur any more. Moreover, right-hand sides of the form $\addroot{a}{BxC}$ and $\addroot{a}{B}$
  will be transformed as follows by the above replacement: In the first case ($\addroot{a}{BxC}$)
  we have $a \in \Sigma$ and $|\ftrees(B)| + |\ftrees(C)| = 1$. Hence the substitution
  leads to $\addroot{a}{Dx}$ or $\addroot{a}{xD}$ with $D \in V_0^\bot$.
  In the second case ($\addroot{a}{B}$)
  either $a = \bot$ and $|\ftrees(B)| = 0$ or $a \in \Sigma$ and $|\ftrees(B)| =
  2$, hence the substitution leads to $\bot$ or $\addroot{a}{CD}$ with $C, D \in
  V_0^\bot$. Thus we finally obtain an FSLP in which all right-hand sides
  have one of the following forms:
\begin{itemize}
\item $\rhs(A) = \bot$
\item $\rhs(A) = \addroot{a}{B C}$
\item $\rhs(A) = \addroot{a}{B x}$
\item $\rhs(A) = \addroot{a}{x B}$
\item $\rhs(A) = B \langle C \rangle$
\end{itemize}
This is in fact a TSLP as defined in Section~\ref{sec-FSLP}.
We can now easily translate right-hand sides of
the above forms into right-hand sides of an FSLP $F'$ for $f$:
\begin{itemize}
\item $\rhs(A) = \bot$ becomes $\rhs(A) = \varepsilon$.
\item $\rhs(A) = \addroot{a}{B C}$ becomes $\rhs(A) = \addroot{a}{B} C$.
\item $\rhs(A) = \addroot{a}{B x}$ becomes $\rhs(A) = \addroot{a}{B} x$.
\item $\rhs(A) = \addroot{a}{x B}$ becomes $\rhs(A) = \addroot{a}{x} B$.
\item $\rhs(A) = B \langle C \rangle$ stays the same.
\end{itemize}
For the correctness of the construction, we have to show that $\ffcns(\valX{F'}) = \valX{F}$.
  In order to do this, we show the following properties:
  \begin{itemize}
  \item $\ffcns(\valXG{F'}{A}) = \valXG{F}{A}$ for all $A \in V_0$,
  \item  $\ffcns(\valXG{F'}{A}\langle f \rangle) = \valXG{F}{A}\langle \ffcns(f) \rangle$
    for all $A \in V_1$, $f \in \mathcal{F}_0(\Sigma)$.
  \end{itemize}
  These are shown using a simple induction and cases analysis:
  \begin{itemize}
    \item $\rhs(A) = \bot$: $\ffcns(\valXG{F'}{A}) = \ffcns(\varepsilon) = \bot = \valXG{F}{A}$.
    \item $\rhs(A) = \addroot{a}{B C}$: We obtain (``ind'' refers to induction on $B$ and $C$)
      \begin{equation*}
      \begin{split}
         \ffcns(\valXG{F'}{A}) & = \ffcns(\valXG{F'}{\addroot{a}{B} C}) \\
         & = \ffcns(a \auf \valXG{F'}{B} \zu \valXG{F'}{C}) \\
         & = a \auf \ffcns(\valXG{F'}{B}) \ffcns(\valXG{F'}{C}) \zu \\
         & \stackrel{\text{ind}}{=} a \auf \valXG{F}{B} \valXG{F}{C} \zu = \valXG{F}{A} .
\end{split}
\end{equation*}
    \item $\rhs(A) = \addroot{a}{B x}$: We obtain
     \begin{equation*}
     \begin{split}
      \ffcns(\valXG{F'}{A}\langle f \rangle) & = \ffcns(\valXG{F'}{\addroot{a}{B} x}\langle f \rangle) \\
      & = \ffcns(a \auf \valXG{F'}{B} \zu f) \\
      & = a \auf \ffcns(\valXG{F'}{B}) \ffcns(f) \zu \\
      & \stackrel{\text{ind}}{=} a \auf \valXG{F}{B} \ffcns(f) \zu \\
      & = \valXG{F}{\addroot{a}{B x}} \langle \ffcns(f)\rangle = \valXG{F}{A}\langle \ffcns(f)\rangle .
      \end{split}
      \end{equation*}
    \item $\rhs(A) = \addroot{a}{x B}$: We obtain
     \begin{equation*}
     \begin{split}
           \ffcns(\valXG{F'}{A}\langle f \rangle) & =  \ffcns(\valXG{F'}{\addroot{a}{x} B}\langle f \rangle) \\
             & = \ffcns(a \auf f \zu \valXG{F'}{B}) \\
             & = a \auf \ffcns(f) \ffcns(\valXG{F'}{B}) \zu  \\
             & \stackrel{\text{ind}}{=} a \auf \ffcns(f) \valXG{F}{B} \zu \\
             & = \valXG{F}{\addroot{a}{x B}}\langle \ffcns(f) \rangle = \valXG{F}{A} \langle \ffcns(f)\rangle .
      \end{split}
      \end{equation*}
      \item $\rhs(A) = B \langle C \rangle$ with $C \in V_0$: We obtain the following, where the first (resp., second) induction step
      uses induction on $B$ (resp., $C$):
       \begin{equation*}
       \begin{split}
        \ffcns(\valXG{F'}{A}) & = \ffcns(\valXG{F'}{B \langle C \rangle}) \\
       & = \ffcns(\valXG{F'}{B} \langle \valXG{F'}{C} \rangle ) \\
       & \stackrel{\text{ind}}{=} \valXG{F}{B} \langle \ffcns(\valXG{F'}{C}) \rangle \\
       & \stackrel{\text{ind}}{=} \valXG{F}{B} \langle \valXG{F}{C}\rangle \\
       & = \valXG{F}{B \langle C \rangle} = \valXG{F}{A}
       \end{split}
       \end{equation*}
    \item $\rhs(A) = B \langle C \rangle$ with $C \in V_1$: We obtain
     \begin{equation*}
     \begin{split}
      \ffcns(\valXG{F'}{A}\langle f \rangle) & = \ffcns(\valXG{F'}{B \langle C \rangle}\langle f \rangle) \\
         & = \ffcns( (\valXG{F'}{B}\langle \valXG{F'}{C} \rangle) \langle f \rangle) \\
         & = \ffcns(\valXG{F'}{B}\langle \valXG{F'}{C}\langle f \rangle \rangle) \\
         & \stackrel{\text{ind}}{=} \valXG{F}{B}\langle \ffcns(\valXG{F'}{C}\langle f\rangle ) \rangle \\
         & \stackrel{\text{ind}}{=} \valXG{F}{B} \langle \valXG{F}{C} \langle \ffcns(f)\rangle\rangle \\
        & =  \valXG{F}{B \langle C \rangle} \langle \ffcns(f)\rangle = \valXG{F}{A} \langle \ffcns(f)\rangle .
       \end{split}
       \end{equation*}
  \end{itemize}
  This concludes the proof of the proposition.
\end{proof}

\begin{proposition}\label{lemma:fcns-transform-rev}
    For every FSLP $F$ over $\Sigma$, we can construct in linear  time
    a TSLP $T$ over $\Sigma \cup \{\bot\}$ with
    $\valX{T} = \ffcns(\valX{F})$ and $|T| \in O(|F|)$.
  \end{proposition}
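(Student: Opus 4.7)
The plan is to apply Theorem~\ref{lemma:fslp_normal_form} to bring $F=(V,S,\rho)$ into normal form, then construct in linear time an FSLP $F'$ over $\Sigma\cup\{\bot\}$ of size $O(|F|)$ with $\valX{F'}=\ffcns(\valX{F})$, and finally invoke the standard FSLP-to-TSLP conversion available for FSLPs producing binary trees (remarked on right after the definition of TSLPs in Section~\ref{sec-FSLP}).

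For each $A\in V_0$ I introduce two variables $A^n,A^m$ of $F'$ intended to evaluate to $\ffcns(\valXG{F}{A})$ and to $\ffcns^*(\valXG{F}{A})$ respectively; here $\ffcns^*(f)\in\mathcal{T}_1(\Sigma\cup\{\bot\})$ denotes $\ffcns(f)$ with its rightmost $\bot$-leaf relabelled as $x$ (so $\ffcns^*(\varepsilon)=x$). This is precisely the context form of $\ffcns$ with respect to horizontal concatenation: $\ffcns(f_1 f_2)=\ffcns^*(f_1)\concv\ffcns(f_2)$ and $\ffcns^*(f_1 f_2)=\ffcns^*(f_1)\concv\ffcns^*(f_2)$. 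Accordingly the normal-form rules $\rho(A)=\varepsilon$, $\rho(A)=BC$, and $\rho(A)=\addroot{a}{B}$ (with $A\in V_0^\bot$) translate to $\rho'(A^n)=\bot$, $\rho'(A^m)=x$; $\rho'(A^n)=B^m\langle C^n\rangle$, $\rho'(A^m)=B^m\langle C^m\rangle$; and $\rho'(A^n)=\addroot{a}{B^n\,\bot}$, $\rho'(A^m)=\addroot{a}{B^n\,x}$.

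The interesting case is a tree-context variable $A\in V_1$, whose value $\valXG{F}{A}$ is a tree context (in normal form). Decomposing such a tree context $t\in\mathcal{T}_1(\Sigma)$ along the path from its root to its $x$-leaf as $t=a_1\auf f_1\, a_2\auf\cdots a_k\auf f_k\, x\, g_k\zu\cdots\zu g_1\zu$, a short induction on $k$ yields the key identity
\[
\ffcns(t\concv g)\;=\;a_1\auf U(t)\langle \ffcns^*(g)\langle V(t)\rangle\rangle\, \bot\zu\qquad\text{for all } g\in\mathcal{F}_0(\Sigma),
\]
where $V(t)=\ffcns(g_k)\in\mathcal{T}_0$ and $U(t)=\ffcns^*(f_1)\concv a_2\auf \ffcns^*(f_2)\concv\cdots\concv a_k\auf \ffcns^*(f_k)\, \ffcns(g_{k-1})\zu\cdots \ffcns(g_1)\zu \in\mathcal{T}_1$. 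Hence for each $A\in V_1$ I introduce variables $A^U,A^V$ of $F'$ for $U(\valXG{F}{A})$ and $V(\valXG{F}{A})$, and statically propagate the root label $\alpha_A\in\Sigma$ of $\valXG{F}{A}$ through the dag. The two normal-form rules for $V_1$ then give $\rho'(A^U)=B^m$, $\rho'(A^V)=C^n$, $\alpha_A=a$ for $\rho(A)=\addroot{a}{BxC}$; and $\rho'(A^U)=B^U\langle\addroot{\alpha_C}{C^U\,B^V}\rangle$, $\rho'(A^V)=C^V$, $\alpha_A=\alpha_B$ for $\rho(A)=B\langle C\rangle$ with $A,B,C\in V_1$. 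The remaining rule $\rho(A)=B\langle C\rangle$ with $A,C\in V_0^\bot$ is handled by specialising the displayed identity to $g=\valXG{F}{C}$: $\rho'(A^n)=\addroot{\alpha_B}{B^U\langle C^m\langle B^V\rangle\rangle\,\bot}$ and $\rho'(A^m)=\addroot{\alpha_B}{B^U\langle C^m\langle B^V\rangle\rangle\,x}$.

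A routine induction along $E(F)^*$ confirms that every new variable computes its intended value; with start variable $S^n$ of $F'$ we therefore have $\valX{F'}=\ffcns(\valX{F})$. The few productions $\rho'(A^m)=x$ arising from $\rho(A)=\varepsilon$ are eliminated by inlining (sound since $A^m\langle Y\rangle=Y\langle A^m\rangle=Y$), and the compound right-hand sides above are split into atomic TSLP-form productions with $O(1)$ overhead per rule; the final FSLP-to-TSLP conversion then yields the desired TSLP $T$ with $|T|\in O(|F|)$. The main conceptual obstacle is discovering the three-piece decomposition $(\alpha_A,A^U,A^V)$ that renders the non-local vertical substitution of tree contexts under $\ffcns$ representable by a single TSLP-style substitution step.
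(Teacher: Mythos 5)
Your construction is correct and essentially coincides with the paper's: your $A^m$ is the paper's $\kcont{A}$, your $A^U$ for $A \in V_1$ is the paper's $\kchop{A}$ (the $\ffcns$-image of the decapitated tree context), and your explicit right-sibling variable $A^V$ plays the role of the paper's $\kcont{\ksibl{A}}\langle\bot\rangle$, with $A^n = \addroot{\alpha_A}{\kchop{A}\,\bot}$ for $A \in V_0^\bot$. The differences are only bookkeeping (you carry the $U/V$ split as separate variables and phrase the invariants via $\ffcns^*$, whereas the paper parameterizes $\kchop{A}$ by $\ffcns(t\,\valXG{F}{\ksibl{A}})$), so both yield the same linear-time, linear-size translation.
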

\begin{proof}
Let $F = (V,S,\rhs)$ be an FSLP over $\Sigma$. We may assume that $F$ is
already in normal form. We construct a TSLP $T = (V',S,\rhs')$ over
$\Sigma \cup \{\bot\}$ where
\begin{itemize}
\item $V_0' = \{ \kchop{A} \mid A \in V_0^\bot \} \uplus \{S\}$
\item $V_1' = \{ \kchop{A} \mid A \in V_1 \} \uplus \{ \kcont{A} \mid A \in V_0 \}$
\end{itemize}
with new variables $\kchop{A}, \kcont{A} \notin V$. For every $A \in V_1$
let $\ksibl{A} \in V_0$ be defined by
\begin{itemize}
\item $\ksibl{A} = C$  if $\rhs(A) = \addroot{a}{BxC}$, and
\item $\ksibl{A} = \ksibl{C}$  if $\rhs(A) = B \lan C \ran$ for $B,C \in V_1$.
\end{itemize}
Thus, $\valXG{F}{\ksibl{A}}$ is the list of right siblings of the
parameter $x$ in $\valXG{F}{A}$. For $A \in V_0^\bot$ we define the top
symbol $\alpha_A \in \Sigma$ as in the Proposition~\ref{lemma:top-dag-transform-rev}.
We then define $\rhs'$ by
\begin{itemize}
\item $\rhs'(S) = \kcont{S} \lan \bot \ran$ \smallskip
\item $\rhs'(\kchop{A}) = \kcont{B} \lan \bot \ran$
       \ if $\rhs(A) = \addroot{a}{B}$
      for $A \in V_0^\bot, B \in V_0$  \smallskip
\item $\rhs'(\kchop{A}) = \kchop{B} \lan \addroot{\rootof{C}}{\kchop{C}\, R_B^\pi\lan \bot \ran} \ran$
       \ if $\rhs(A) = B \lan C \ran$
       \smallskip
\item $\rhs'(\kchop{A}) = \kcont{B}$
       \ if $\rhs(A) = \addroot{a}{B x C}$
      for $A \in V_1^\bot$, $B,C \in V_0$  \smallskip
\item $\rhs'(\kcont{A}) = \addroot{\rootof{A}}{\kchop{A}\,x}$
       \ for every $A \in V_0^\bot$ \smallskip
\item $\rhs'(\kcont{A}) = x$
       \ if $\rhs(A) = \eps$
       for $A \in V_0^\top$  \smallskip
\item $\rhs'(\kcont{A}) = \kcont{B} \lan \kcont{C} \ran$
       \ if $\rhs(A) = B C$
       for $A \in V_0^\top$, $B,C \in V_0$.
\end{itemize}
Note that in $\rhs'(\kchop{A}) = \kchop{B} \lan \addroot{\rootof{C}}{\kchop{C}\, R_B^\pi\lan \bot \ran} \ran$ we
may have $\kchop{C} \in V'_0$ (if $C \in V_0^\bot$) or $\kchop{C} \in V'_1$ (if $C \in V_1$).
In the latter case we obtain for every $f \in \mathcal{F}_0(\Sigma)$,
\[
\valXG{F'}{\kchop{A}} \lan f \ran = \valXG{F'}{\kchop{B}} \lan \rootof{C} \auf \valXG{F'}{\kchop{C}}\lan f \ran\, \valXG{F'}{R_B^\pi}\lan \bot \ran \zu \ran .
\]
Let $\Delta \colon \mathcal{T}_0(\Sigma) \to \mathcal{F}_0(\Sigma)$ be defined by
$\Delta (a \auf f \zu) = f$. We will prove the following equations, which
express the role of the new variables in $V'$.
\begin{enumerate}[(1)]
\item\label{item:fcns-chop0}
      $\valXG{F'}{\kchop{A}} =
      \ffcns(\Delta (\valXG{F}{A}))$
      \ for every $A \in V_0^\bot$.
\item\label{item:fcns-cont}
      $\valXG{F'}{\kcont{A}} \lan \ffcns(f) \ran =
      \ffcns(\valXG{F}{A}\, f)$
      \ for every $A \in V_0, f \in \mathcal{F}_0(\Sigma)$.
\item\label{item:fcns-chop1}
      $\valXG{F'}{\kchop{A}} \lan \ffcns(t\, \valXG{F}{\ksibl{A}}) \ran =
      \ffcns(\Delta(\valXG{F}{A} \lan t \ran))$
      \ for every $A \in V_1, t \in \mathcal{T}_0(\Sigma)$.
\end{enumerate}
From \ref{item:fcns-cont} we obtain
$\valXG{F'}{\kcont{A}} \lan \bot \ran
 = \valXG{F'}{\kcont{A}} \lan \ffcns (\eps) \ran
 = \ffcns(\valXG{F}{A})$ for every $A \in V_0$.
This implies
$\valX{F'}
 = \valXG{F'}{S}
 = \valXG{F'}{\kcont{S}} \lan \bot \ran
 = \ffcns(\valXG{F}{S}) = \ffcns (\valX{F})$
which concludes the proof of
Proposition~\ref{lemma:fcns-transform-rev}. Hence only equations
\ref{item:fcns-chop0} to \ref{item:fcns-chop1} remain to be proved,
which is done by the following induction on the partial order induced by the dag $F$. Let
$A \in V$:

\medskip
\noindent
\ref{item:fcns-chop0} must be proved for every $A \in V_0^\bot$:
\begin{itemize}
\item If $\rhs(A) = \addroot{a}{B}$ then
  $\valXG{F'}{\kchop{A}}
   = \valXG{F'}{\kcont{B}}\lan \bot \ran
   = \ffcns(\valXG{F}{B})
   = \ffcns(\Delta (\valXG{F}{A}))$.\smallskip
\item If $\rhs(A) = B \lan C \ran$ with $B \in V_1$, $C \in V_0^\bot$ then
  \begin{align*}
    \valXG{F'}{\kchop{A}}
    = {} & \valXG{F'}{\kchop{B}}
          \lan \rootof{C} \auf \valXG{F'}{\kchop{C}}\,\valXG{F'}{R_B^\pi} \lan \bot \ran \zu \ran \\
    = {} & \valXG{F'}{\kchop{B}}
          \lan \rootof{C} \auf \ffcns (\Delta (\valXG{F}{C}))\,\ffcns(\valXG{F}{\ksibl{B}}) \zu \ran \\
         & \text{by induction for $C \in V_0^\bot$ and $\ksibl{B} \in V_0$} \\
    = {} & \valXG{F'}{\kchop{B}}
          \lan \ffcns (\rootof{C} \auf \Delta (\valXG{F}{C})\zu \, \valXG{F}{\ksibl{B}}) \ran \\
         & \text{by definition of $\ffcns$} \\
    = {} & \valXG{F'}{\kchop{B}} \lan \ffcns (\valXG{F}{C} \, \valXG{F}{\ksibl{B}}) \ran\\
    = {} & \ffcns (\Delta (\valXG{F}{B} \lan \valXG{F}{C} \ran )) \\
         & \text{by induction for $B \in V_1$}\\
    = {} & \ffcns (\Delta (\valXG{F}{A})).
  \end{align*}
\end{itemize}
\noindent
\ref{item:fcns-cont} must be proved for every $A \in V_0$:
\begin{itemize}
\item If $A \in V_0^\bot$ then
  \begin{align*}
    \valXG{F'}{\kcont{A}} \lan \ffcns (f) \ran
    = {} & \rootof{A} \auf \valXG{F'}{\kchop{A}} \, \ffcns(f) \zu \\
    = {} & \rootof{A} \auf \ffcns (\Delta (\valXG{F}{A})) \, \ffcns(f) \zu \\
         & \text{by equation \ref{item:fcns-chop0} for $\kchop{A}$} \\
    = {} & \ffcns (\rootof{A} \auf \Delta (\valXG{F}{A}) \zu \, f) \\
         & \text{by definition of $\ffcns$} \\
    = {} & \ffcns (\valXG{F}{A} \, f).
  \end{align*}
\item If $\rhs(A) = \eps$ then
  $\valXG{F'}{\kcont{A}} \lan \ffcns (f) \ran
  = \ffcns (f)
  = \ffcns (\eps\,f)
  = \ffcns (\valXG{F}{A}\,f)$.\smallskip
\item If $\rhs(A) = B C$ then
  \begin{align*}
    \valXG{F'}{\kcont{A}} \lan \ffcns (f) \ran
    = {} & \valXG{F'}{\kcont{B}} \lan \valXG{F'}{\kcont{C}} \lan \ffcns (f) \ran \ran \\
    = {} & \ffcns \lan \valXG{F}{B}\,\valXG{F}{C}\,f \ran \\
         & \text{by induction for $B$ and $C$} \\
    = {} & \ffcns (\valXG{F}{BC} \,f)\\
         & \text{by definition of $\ffcns$} \\
    = {} & \ffcns (\valXG{F}{A} \, f).
  \end{align*}
\end{itemize}

\noindent
\ref{item:fcns-chop1} must be proved for every $A \in V_1$:

\begin{itemize}
\item If $\rhs(A) = \addroot{a}{BxC}$ then
  \begin{align*}
    \valXG{F'}{\kchop{A}} \lan \ffcns (t\,\valXG{F}{\ksibl{A}}) \ran
    = {} & \valXG{F'}{\kcont{B}} \lan \ffcns (t\,\valXG{F}{\ksibl{A}}) \ran \\
    = {} & \ffcns (\valXG{F}{B}\,t\,\valXG{F}{C}) \\
         & \text{by induction for $B$ and because $\ksibl{A} = C$} \\
    = {} & \ffcns (\Delta (\valXG{F}{\addroot{a}{BxC}} \lan t \ran)) \\
    = {} & \ffcns (\Delta (\valXG{F}{A} \lan t \ran)).
  \end{align*}
\item If $\rhs(A) = B \lan C \ran$ with $A,B,C \in V_1$ then
  \begin{align*}
    \valXG{F'}{\kchop{A}} \lan \ffcns (t\,\valXG{F}{\ksibl{A}}) \ran
    = {} & \valXG{F'}{\kchop{B}}
           \lan \rootof{C} \auf \valXG{F'}{\kchop{C}} \lan \ffcns (t\,\valXG{F}{\ksibl{A}}) \ran\,
           \valXG{F'}{R_B^\pi} \lan \bot \ran \zu \ran \\
    = {} & \valXG{F'}{\kchop{B}}
           \lan \rootof{C} \auf \valXG{F'}{\kchop{C}} \lan \ffcns (t\,\valXG{F}{\ksibl{C}}) \ran\,
           \ffcns (\valXG{F}{\ksibl{B}}) \zu \ran \\
         & \text{by induction for $\ksibl{B}$ and because $\ksibl{A} = \ksibl{C}$}\\
    = {} & \valXG{F'}{\kchop{B}}
          \lan \rootof{C} \auf \ffcns (\Delta (\valXG{F}{C} \lan t \ran )) \,
          \ffcns (\valXG{F}{\ksibl{B}}) \zu \ran \\
         & \text{by induction for $C$}\\
    = {} & \valXG{F'}{\kchop{B}}
          \lan \ffcns (\rootof{C} \auf \Delta (\valXG{F}{C} \lan t \ran) \zu \, \valXG{F}{\ksibl{B}}) \ran  \\
         & \text{by definition of $\ffcns$} \\
    = {} & \valXG{F'}{\kchop{B}}
          \lan \ffcns (\valXG{F}{C} \lan t \ran  \, \valXG{F}{\ksibl{B}}) \ran  \\
    = {} & \ffcns (\Delta ( \valXG{F}{B} \lan \valXG{F}{C} \lan t \ran \ran )) \\
         & \text{by induction for $B$} \\
    = {} & \ffcns (\Delta ( \valXG{F}{A} \lan t \ran )).
  \end{align*}
\end{itemize}
This concludes the proof of the proposition.
\end{proof}
Proposition~\ref{lemma:fcns-transform-rev}  and
the construction from~\cite[Proposition~8.3.2]{tata07} allow to reduce
the evaluation of forest
automata on FSLPs (for a definition of forest and tree automata, see
\cite{tata07}) to the evaluation of ordinary tree automata on binary trees.
The latter problem can be solved in polynomial time ~\cite{LoMaSS12}, which yields:
\begin{corollary}
  \label{corollary:forest_automaton}
  Given a forest automaton $A$ and
  an FSLP (or top dag) $F$
  we can check in polynomial time whether $A$ accepts $\valX{F}$.
\end{corollary}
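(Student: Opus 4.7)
The plan is to chain together three ingredients already available: the fcns-translation for FSLPs (Proposition~\ref{lemma:fcns-transform-rev}), the classical simulation of a forest automaton by a tree automaton on the fcns-encoding (\cite[Proposition~8.3.2]{tata07}), and the known polynomial-time algorithm for running a tree automaton on a TSLP over a ranked alphabet~\cite{LoMaSS12}. So the corollary is essentially a composition of existing black boxes; no new combinatorial idea is needed.

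Concretely, given an FSLP $F$ over $\Sigma$ and a forest automaton $A$, I would first apply Proposition~\ref{lemma:fcns-transform-rev} to build in linear time a TSLP $T$ over $\Sigma \uplus \{\bot\}$ with $\valX{T} = \ffcns(\valX{F})$ and $|T| \in O(|F|)$. In parallel, I would invoke \cite[Proposition~8.3.2]{tata07} to construct a tree automaton $A'$ over the ranked alphabet obtained by assigning rank $2$ to every $a \in \Sigma$ and rank $0$ to $\bot$, such that for every forest $f \in \mathcal{F}_0(\Sigma)$, $A$ accepts $f$ if and only if $A'$ accepts $\ffcns(f)$. This construction is effective in time polynomial in $|A|$. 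Then I would run the polynomial-time TSLP evaluation algorithm of~\cite{LoMaSS12} for $A'$ on $T$, which decides in polynomial time whether $A'$ accepts $\valX{T} = \ffcns(\valX{F})$, and hence whether $A$ accepts $\valX{F}$.

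For the top dag case, I would first apply Proposition~\ref{lemma-top-dag-transform} to transform the top dag into an FSLP of linear size, and then proceed as above. All three stages are polynomial (indeed, the first two are linear), so the overall procedure runs in polynomial time.

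I do not expect a genuine obstacle here: the only point that needs a brief sanity check is that \cite[Proposition~8.3.2]{tata07} is stated for the fcns-encoding with exactly the conventions we use (labels on $\Sigma$-nodes preserved, $\bot$ as the sentinel leaf), so that the state of $A'$ at the root of $\ffcns(f)$ corresponds to the state of $A$ on the forest $f$. Once that correspondence is observed, correctness of the composition is immediate, and the polynomial-time bound follows because each of the three steps is polynomial and the size blow-ups are linear in $|F|$ (respectively $|D|$) and polynomial in $|A|$.
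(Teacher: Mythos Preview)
Your proposal is correct and follows essentially the same approach as the paper's proof: convert the FSLP to a TSLP for the fcns-encoding via Proposition~\ref{lemma:fcns-transform-rev}, convert the forest automaton to a tree automaton via \cite[Proposition~8.3.2]{tata07}, and then apply the polynomial-time evaluation result of~\cite{LoMaSS12}. Your treatment of the top dag case via Proposition~\ref{lemma-top-dag-transform} is the obvious reduction implicit in the paper's statement.
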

\begin{proof}
  First, we construct a TSLP $T$ for $\ffcns(\valX{F})$ using
  Proposition~\ref{lemma:fcns-transform-rev}.
  We also convert $A$ in polynomial time into a tree automaton $A'$
  such that $A'$ accepts $\ffcns(f)$ if and only if $A$ accepts $f$,
 using the construction from~\cite[Proposition~8.3.2]{tata07}. Finally, we use the result from
~\cite{LoMaSS12} to check in polynomial time whether $A'$ accepts
$\valX{T}$.
\end{proof}
In \cite{BilleGLW15}, a linear time algorithm is presented that constructs from a tree of size $n$ with $\sigma$ many node labels
a top dag of size $O(n / \log^{0.19}_\sigma n)$. In  \cite{Hubschle-Schneider15} this
bound was improved to $O(n \log\log n / \log_\sigma n)$ (for the same algorithm as in \cite{BilleGLW15}).
In \cite{LoReSi17} we recently presented an alternative construction that achieves the information-theoretic optimum of
$O(n / \log_\sigma n)$. Moreover, as in  \cite{BilleGLW15}, the constructed top dag satisfies the additional size bound
$O(d \cdot \log n)$, where $d$ is the size of the minimal dag of $t$.
With Proposition~\ref{lemma-top-dag-transform} and~\ref{lemma:fcns-transform-rev}
we get:

\begin{corollary}\label{corollary:optimal}
  Given a  tree $t$ of size $n$ with $\sigma$ many node labels, one can construct
  in linear time an FSLP for $t$ (or an TSLP for $\ffcns(t)$) of size
 $O(n / \log_\sigma n) \cap O(d \cdot \log n)$, where $d$ is the size of the minimal
 dag of $t$.
\end{corollary}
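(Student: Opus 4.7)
The plan is to compose three constructions already available. First, I would apply the linear-time algorithm from \cite{LoReSi17} directly to $t$ in order to obtain a top dag $D$ over $\Sigma$ with $\valX{D} = t$ satisfying the two bounds $|D| \in O(n/\log_\sigma n)$ and $|D| \in O(d \cdot \log n)$ simultaneously. This is the non-trivial input to the whole argument; everything else is bookkeeping.

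Next, I would feed $D$ into Proposition~\ref{lemma-top-dag-transform} to obtain, in linear time in $|D|$, an FSLP $F$ with $\valX{F} = \valX{D} = t$ and $|F| \in O(|D|)$. Since $|D| \leq n$, the total running time so far is still $O(n)$, and $F$ inherits both size bounds up to a constant factor, which settles the FSLP part of the statement.

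For the TSLP part, I would then apply Proposition~\ref{lemma:fcns-transform-rev} to $F$ to construct, in time $O(|F|)$, a TSLP $T$ over $\Sigma \uplus \{\bot\}$ with $\valX{T} = \ffcns(\valX{F}) = \ffcns(t)$ and $|T| \in O(|F|)$. Chaining the three constant-factor size bounds yields $|T| \in O(|D|) \subseteq O(n/\log_\sigma n) \cap O(d \cdot \log n)$, and the running times add up to $O(n)$.

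There is essentially no obstacle: the only point that requires any care is checking that the composition of the three linear-time procedures is itself linear in $n$ (not in any of the potentially larger intermediate objects), which holds because the top dag $D$ produced in the first step already has size at most $n$, and each subsequent transformation runs in time linear in the size of its input grammar. In particular, no hidden $\sigma$-factor appears, since we go from top dags \emph{to} FSLPs (the direction in which Proposition~\ref{lemma-top-dag-transform} has no $\sigma$ blow-up), and not the reverse.
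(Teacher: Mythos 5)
Your proposal is correct and is exactly the argument the paper intends: the corollary is stated immediately after recalling the top-dag construction of \cite{LoReSi17} with the two size bounds, and the paper explicitly says it follows ``with Proposition~\ref{lemma-top-dag-transform} and~\ref{lemma:fcns-transform-rev}'', i.e.\ the same chain top dag $\to$ FSLP $\to$ TSLP with constant-factor blow-ups and linear running times that you describe. Your added remark about avoiding the $\sigma$-factor by only using the top-dag-to-FSLP direction is a correct and worthwhile sanity check, but no new idea is needed beyond the paper's route.
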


\section{Testing equality modulo associativity and commutativity}

In this section we will give an algorithmic application which
proves the utility of FSLPs (even if we deal with binary trees).
We fix two subsets
$\mathcal{A} \subseteq \Sigma$ (the set of {\em associative symbols}) and
$\mathcal{C} \subseteq \Sigma$ (the  set of {\em commutative symbols}).
This means that we impose the following identities
for all $a \in \mathcal{A}$, $c \in \mathcal{C}$,
all trees $t_1, \ldots, t_n \in \mathcal{T}_0(\Sigma)$, all permutations
$\sigma \colon \{1, \ldots, n\} \to \{1,\ldots,n\}$, and all $1 \leq i \leq j \leq n+1$:
\begin{align}
\label{eq-assoc}
a \auf t_1 \cdots t_n \zu & =
a \auf t_1 \cdots t_{i-1} a \auf t_i \cdots t_{j-1} \zu t_{j} \cdots t_n \zu   \\
\label{eq-comm}
c \auf t_1 \cdots t_n \zu & = c\auf t_{\sigma(1)} \cdots t_{\sigma(n)} \zu .
\end{align}
Note that the standard law of associativity for a binary symbol $\circ$ (i.e., $x \circ (y \circ z) = (x \circ y) \circ z$)
can be captured by making $\circ$ an (unranked) associative symbol in the sense of \eqref{eq-assoc}.
\subsection{Associative symbols}
\label{sec-assoc}

\newcounter{assocdef}
\renewcommand{\theassocdef}{\roman{assocdef}}
\newcommand{\nextstepassoc}[1]{\refstepcounter{assocdef}\theassocdef\label{#1}}

Below, we define the associative normal form $\nfa(f)$ of a forest $f$
and show that from an FSLP $F$ we can compute in linear time an FSLP $F'$
with $\valX{F'} = \nfa(\valX{F})$.
For trees $s, t \in \mathcal{T}_0(\Sigma)$ we have that $s = t$ modulo the
identities in \eqref{eq-assoc} if and only if $\nfa(s) = \nfa(t)$.
The generalization to forests is needed for the induction, where a
slight technical problem arises. Whether the forests $t_1 \cdots t_{i-1} a
\auf t_i \cdots t_{j-1} \zu t_{j} \cdots t_n$ and $t_1 \cdots t_n$ are equal modulo
the identities in \eqref{eq-assoc}
actually depends on the symbol on top of these two
forests. If it is an $a$, and $a\in\mathcal{A}$, then the two forests
are equal modulo associativity, otherwise not.
To cope with this problem, we use for every associative symbol $a \in \mathcal{A}$ a function
$\pull_a \colon \mathcal{F}_0(\Sigma) \to \mathcal{F}_0(\Sigma)$ that pulls up occurrences of $a$ whenever possible.

Let $\bullet \notin \Sigma$ be a new symbol.
For every $a \in \Sigma \cup \{ \bullet \}$ let $\pull_a \colon \mathcal{F}_0(\Sigma) \to
\mathcal{F}_0(\Sigma)$ be defined as follows, where $f \in \mathcal{F}_0(\Sigma)$ and $t_1, \ldots, t_n \in \mathcal{T}_0(\Sigma)$:
\[
\pull_a(b\auf f \zu)  = \begin{cases}
      \pull_a(f) & \text{if } a \in \mathcal{A} \text{ and } a = b, \\
      b \auf \pull_b(f) \zu & \text{otherwise,} \\
\end{cases} \qquad
\pull_a(t_1 \cdots t_n)  = \pull_a(t_1)  \cdots  \pull_a(t_n) .
\]
In particular, $\pull_a(\varepsilon) = \varepsilon$. Moreover, define
$\nfa \colon \mathcal{F}_0(\Sigma) \to \mathcal{F}_0(\Sigma)$ by $\nfa(f) = \pull_\bullet(f)$.

\begin{example}
Let $t = a \auf a \auf cd \zu b \auf cd \zu a \auf e \zu \zu$ and
$\mathcal{A} = \{a\}$. We obtain
\begin{align*}
 \pull_a(t) & = \pull_a(a \auf cd \zu b \auf cd \zu a \auf e \zu)  = \pull_a(a \auf cd \zu) \pull_a ( b \auf cd \zu ) \pull_a( a \auf e \zu ) \\
& = \pull_a(cd) b \auf  \pull_b(cd) \zu \pull_a(e)  = cd b \auf cd \zu e , \\
\pull_b(t) & = a \auf \pull_a(a \auf cd \zu b \auf cd \zu a \auf e \zu) \zu
  = a \auf cd b \auf cd \zu e \zu .
\end{align*}
\end{example}
To show the following simple lemma one considers the terminating and confluent rewriting system obtained
by directing the equations \eqref{eq-assoc} from right to left.
\begin{lemma} \label{lemma-only-assoc}
For two forests $f_1, f_2 \in \mathcal{F}_0(\Sigma)$,  $\nfa(f_1) = \nfa(f_2)$ if and only
if $f_1$ and $f_2$ are equal modulo the identities in \eqref{eq-assoc}
for all $a \in \mathcal{A}$.
\end{lemma}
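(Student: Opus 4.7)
The plan is to orient each identity in \eqref{eq-assoc} from right to left: for $a \in \mathcal{A}$, rewrite any occurrence (at any position in a forest) of a subterm $a\auf t_1 \cdots t_{i-1}\, a\auf t_i \cdots t_{j-1} \zu\, t_j \cdots t_n \zu$ to $a\auf t_1 \cdots t_n \zu$. Call this one-step rewrite relation $\to$ on $\mathcal{F}_0(\Sigma)$. Every step strictly decreases the number of nodes, so $\to$ is terminating; its normal forms are exactly the forests in which no node labelled by some $a \in \mathcal{A}$ has a child with the same label $a$. Since each rule is just one application of \eqref{eq-assoc} read right-to-left (in a forest context), the symmetric-transitive closure $\leftrightarrow^*$ coincides with the congruence on $\mathcal{F}_0(\Sigma)$ generated by \eqref{eq-assoc}, so the right-hand side of the lemma is exactly $f_1 \leftrightarrow^* f_2$.

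Next I would verify local confluence. Two distinct $\to$-redexes either occur at disjoint positions in a forest (in which case the two reductions trivially commute), or one is properly contained in the other, which can only happen when three occurrences of the same $a \in \mathcal{A}$ sit on a root-to-leaf chain $a\auf \cdots a\auf \cdots a\auf \cdots \zu \cdots \zu \cdots \zu$: a short case inspection shows that collapsing the inner $a$-pair first and then the outer, or vice versa, produces the same flattened $a$-node carrying exactly the union of all the children. Newman's lemma then yields confluence, so every forest $f$ has a unique $\to$-normal form $f{\downarrow}$, and by standard rewriting theory $f_1 \leftrightarrow^* f_2$ if and only if $f_1{\downarrow} = f_2{\downarrow}$.

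It remains to identify $\nfa(f)$ with $f{\downarrow}$. By structural induction on $f$ I would prove simultaneously, for every $a \in \Sigma \cup \{\bullet\}$, the two statements: (a)~$\pull_a(f)$ is in $\to$-normal form, and (b)~if $a \notin \mathcal{A}$, then $\pull_a(f) = f{\downarrow}$. The defining clauses of $\pull$ mirror the rewrite system: the clause $\pull_a(b\auf f\zu) = \pull_a(f)$ (when $a = b \in \mathcal{A}$) strips off exactly the top-level $a$-wrapper that would otherwise create a forbidden $a$-child configuration under the ambient $a$-parent, while $\pull_a(b\auf f\zu) = b\auf \pull_b(f)\zu$ simply recurses with the updated ambient label $b$, matching the induction hypothesis applied inside. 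Specialising (b) to $a = \bullet \notin \mathcal{A}$ gives $\nfa(f) = \pull_\bullet(f) = f{\downarrow}$, and combined with the previous paragraph this proves the lemma.

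The only mildly nontrivial step is the local-confluence analysis, but it reduces at once to the three-$a$ chain case, where both reduction orders visibly produce the same flattened $a$-node; everything else is routine termination, uniqueness of normal forms, and structural induction bookkeeping against the definition of $\pull$.
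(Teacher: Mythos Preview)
Your approach is exactly the paper's: orient \eqref{eq-assoc} right-to-left into a terminating, confluent rewrite system and identify $\nfa(f)$ with its unique normal form. The paper's proof is in fact terser than yours (it merely asserts confluence ``by considering all possible overlappings'' and that the normal form is $\nfa(f)$), so the plan is sound.

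Two places need tightening. First, your dichotomy ``disjoint positions or one properly contained in the other'' misses the critical pair where two redexes share the \emph{same} outer $a$-node but collapse different $a$-labelled children (siblings); this overlap is also trivially joinable, but it is a genuine case in the confluence check and is not covered by the three-$a$ chain. Second, the simultaneous induction on (a) and (b) does not close as stated: to establish (b) for $f = b\auf g\zu$ with $b \in \mathcal{A}$ you must show that $b\auf \pull_b(g)\zu = (b\auf g\zu){\downarrow}$, but the inductive hypothesis (b) is only available for labels outside $\mathcal{A}$, and (a) alone tells you $\pull_b(g)$ is a normal form without telling you that $b\auf g\zu$ actually rewrites to $b\auf \pull_b(g)\zu$. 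You need an additional clause in the induction, e.g.\ ``for every $a \in \mathcal{A}$, $a\auf f\zu \to^* a\auf \pull_a(f)\zu$'' (together with the observation that $\pull_a(f)$ has no top-level $a$-rooted tree when $a \in \mathcal{A}$), to carry the argument through the step where the ambient label becomes associative.
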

\begin{proof}
Consider the (infinite) term rewriting system consisting of all rules
\begin{equation} \label{rewrite-sys}
a \auf t_1 \cdots t_{i-1} a \auf t_i \cdots t_{j-1} \zu t_{j} \cdots t_n \zu \to a \auf t_1 \cdots t_n \zu
\end{equation}
for $a \in \mathcal{A}$, $t_1, \ldots, t_n \in \mathcal{T}_0(\Sigma)$ and $1 \leq i \leq j \leq n+1$.
Let $\to$  be the resulting rewrite relation.
It is clearly terminating. Moreover, by considering all possible overlappings of left-hand
sides, one sees that the system is also confluent. Hence, every forest $f$ rewrites into a unique
normal form, which is in fact $\nfa(f)$. The lemma follows since $f_1$ and $f_2$ are equal modulo the identities in \eqref{eq-assoc}
if and only if they rewrite into the same normal forms, which means that $\nfa(f_1) = \nfa(f_2)$.
\end{proof}

\begin{lemma}
\label{lemma:assoc}
From a given FSLP $F = (V,S,\rho)$ over $\Sigma$
one can construct in
time $\mathcal{O}(|F|\cdot|\Sigma|)$
an FSLP $F'$ with $\valX{F'} = \nfa(\valX{F})$.
\end{lemma}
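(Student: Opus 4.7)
The plan is to first apply Theorem~\ref{lemma:fslp_normal_form} to put $F$ into normal form, so that every right-hand side has one of the five standard shapes. The new FSLP $F'$ is then built by introducing, for each variable $A \in V$ and each ``outer symbol'' $a \in \Sigma \cup \{\bullet\}$, a fresh variable $A^a$; this replication is what produces the multiplicative factor $|\Sigma|+1$. The semantic invariants I would maintain are $\valXG{F'}{A^a} = \pull_a(\valXG{F}{A})$ for $A \in V_0$, and $\valXG{F'}{A^a}\langle \pull_{\tau(A,a)}(f)\rangle = \pull_a(\valXG{F}{A}\langle f\rangle)$ for $A \in V_1$ and every $f \in \mathcal{F}_0(\Sigma)$, where $\tau(A,a) \in \Sigma$ tracks which outer symbol is active at the parameter after one descends the spine of $A$. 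The start variable of $F'$ is taken to be $S^\bullet$; since $\bullet \notin \mathcal{A}$, $\pull_\bullet$ coincides with $\nfa$, so $\valX{F'} = \nfa(\valX{F})$ will follow from the $V_0$-invariant applied at $S$.

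Next I would tabulate the auxiliary function $\tau \colon V_1 \times (\Sigma \cup \{\bullet\}) \to \Sigma$ bottom-up on the dag: if $\rho(A) = b\auf BxC\zu$ then $\tau(A,a) = a$ when $a = b \in \mathcal{A}$ and $\tau(A,a) = b$ otherwise; if $\rho(A) = B\langle C\rangle$ with $B,C \in V_1$, then $\tau(A,a) = \tau(C, \tau(B,a))$. There are $|V_1|\cdot(|\Sigma|+1)$ such values, each computable in constant time given the earlier ones, for total time $O(|F|\cdot|\Sigma|)$. The right-hand sides of $F'$ are then defined by case on $\rho(A)$: set $\rho'(A^a) = \varepsilon$ if $\rho(A) = \varepsilon$; $\rho'(A^a) = B^a C^a$ if $\rho(A) = BC$ (using the distributivity $\pull_a(fg) = \pull_a(f)\pull_a(g)$); for $\rho(A) = b\auf B\zu$ with $A \in V_0^\bot$ take $\rho'(A^a) = B^a$ when $a = b \in \mathcal{A}$ and $\rho'(A^a) = b\auf B^b\zu$ otherwise; for both subcases of $\rho(A) = B\langle C\rangle$ (with $C \in V_0^\bot$ or $C \in V_1$) take $\rho'(A^a) = B^a \langle C^{\tau(B,a)}\rangle$; and for $\rho(A) = b\auf BxC\zu$ take $\rho'(A^a) = B^a x C^a$ when $a = b \in \mathcal{A}$ and $\rho'(A^a) = b\auf B^b x C^b\zu$ otherwise. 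Each right-hand side has constant size, so $|F'| \in O(|F|\cdot|\Sigma|)$ and the whole construction runs in time $O(|F|\cdot|\Sigma|)$.

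To finish, I would verify the two semantic invariants by a routine induction along the partial order $E(F)^*$, in each case simply unfolding the definition of $\pull_a$. The only case requiring a little care is $\rho(A) = B\langle C\rangle$ with $A,B,C \in V_1$: one applies the inductive hypothesis for $C$ with outer symbol $\tau(B,a)$ to rewrite $\valXG{F'}{C^{\tau(B,a)}}\langle \pull_{\tau(C,\tau(B,a))}(f)\rangle$ as $\pull_{\tau(B,a)}(\valXG{F}{C}\langle f\rangle)$, and then the hypothesis for $B$ with outer symbol $a$ on the resulting forest; this is precisely the manoeuvre that the composite definition $\tau(A,a) = \tau(C,\tau(B,a))$ is designed to enable. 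The main conceptual obstacle is recognising that the ``outer pull symbol'' at the parameter position must be committed to at compile time along every spine, because $\pull_a$ and $\pull_b$ genuinely differ at a $b$-labelled root when $b \in \mathcal{A}$; this is what forces us to replicate each variable $|\Sigma|+1$ times and explains both the $|\Sigma|$-blowup and our reliance on the normal form, which makes the spine structure explicit in the grammar.
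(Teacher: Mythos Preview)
Your proposal is correct and follows essentially the same route as the paper: put $F$ in normal form, replicate each variable once per outer symbol $a\in\Sigma\cup\{\bullet\}$, define the new right-hand sides by the obvious case analysis on $\rho$, and verify the two $\pull_a$-invariants by induction along the dag.

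The only difference is cosmetic. Your two-argument function $\tau(A,a)$ is in fact independent of $a$: in the base case $\rho(A)=b\auf BxC\zu$ you always get $\tau(A,a)=b$ (even when $a=b\in\mathcal{A}$, since then $\tau(A,a)=a=b$), and in the recursive case $\tau(A,a)=\tau(C,\tau(B,a))$ therefore collapses to the label of the parent of $x$ in $\valXG{F}{C}$, which is also the parent of $x$ in $\valXG{F}{A}$. The paper simply calls this label $\omega_A$ and computes all the $\omega_A$'s in a single linear pass, rather than filling a $|V_1|\times(|\Sigma|+1)$ table. Either way the overall bound is $O(|F|\cdot|\Sigma|)$, so nothing is lost; you may just want to note the simplification.
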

\begin{proof}
By Theorem~\ref{lemma:fslp_normal_form}, we assume that $F$ is in normal form.
We introduce new variables $A_a$ for all $a \in \Sigma \cup \{ \bullet \}$
and define the right-hand sides of $F'$ such that
$\valXG{F'}{A_a} = \pull_a(\valXG{F}{A})$
for all $A \in V_0$ and $\valXG{F'}{B_a \lan \pull_{b}(f) \ran} =
 \pull_a(\valXG{F}{B \lan f \ran})$ for all $B \in V_1$,
$f \in \mathcal{F}_0(\Sigma)$, where $b$ is the label of the parent node of the parameter
$x$ in $\valXG{F}{B}$. This parent node exists since $F$ is in normal form.
  For every $B \in V_1$ let $\omega_B$ be the symbol
  above $x$ in $\valXG{F}{B}$. These symbols exist by definition of
  the normal form, and they can be computed all together in
  linear time. Now let $F' = (V', S_\bullet, \rhs')$ where
  $V' = \{ A_a \mid A \in V, a \in \Sigma \cup \{ \bullet \} \}$,
  and $\rhs'$ is defined by
  \begin{itemize}
  \item $\rhs'(A_a) = \varepsilon$
    if $\rhs(A) = \varepsilon$,
  \item $\rhs'(A_a) = B_a C_a$
    if $\rhs(A) = BC$,
  \item $\rhs'(A_a) = B_a \langle C_{\omega_B} \rangle$
    if $\rhs(A) = B \langle C \rangle$,
  \item $\rhs'(A_a) = B_a$
    if $\rhs(A) = \addroot{a}{B}$ and $a \in \mathcal{A}$,
  \item $\rhs'(A_a) = \addroot{b}{B_b}$
    if $\rhs(A) = \addroot{b}{B}$
    with $b \neq a$ or $b \notin \mathcal{A}$,
  \item $\rhs'(A_a) = B_a x C_a$
    if $\rhs(A) = \addroot{a}{B x C}$
    with $a \in \mathcal{A}$,
  \item $\rhs'(A_a) = \addroot{b}{B_b x C_b}$
    if $\rhs(A) = \addroot{b}{B x C}$
    with $b \neq a$ or $b \not \in \mathcal{A}$.
  \end{itemize}
  An induction shows:
  \begin{enumerate}[(i)]
\item $\valXG{F'}{A_a} = \pull_a(\valXG{F}{A})$
  for all $A \in V_0$ and $a \in \Sigma \cup \{ \bullet \}$, and
  \item $\valXG{F'}{B_a \lan \pull_{\omega_B}(f) \ran} =
  \pull_a(\valXG{F}{B \lan f \ran})$ for all $B \in V_1$, $a \in \Sigma \cup \{ \bullet \}$ and
  $f \in \mathcal{F}_0(\Sigma)$.
\end{enumerate}
From ({\it i}\/) we obtain $\valX{F'} =
  \valXG{F'}{S_\bullet} = \pull_\bullet(\valXG{F}{S}) = \nfa(\valXG{F}{S}) = \nfa(\valX{F})$.
\end{proof}

\subsection{Commutative symbols} \label{sec-commutative}
To test whether two trees over $\Sigma$ are
equivalent with respect to commutativity, we define a {\em commutative normal form} $\fcanon(t)$ of a tree
$t \in \mathcal{T}_0(\Sigma)$  such that $\fcanon(t_1) = \fcanon(t_2)$
if and only if $t_1$ and $t_2$ are equivalent with respect to the
identities in \eqref{eq-comm}
for all  $c \in \mathcal{C}$.

We start with a general definition: Let $\Delta$ be a possibly infinite alphabet together
with a total order $<$. Let $\leq$ be the reflexive closure of $<$.
Define the function
$\ksort{<} \colon \Delta^* \to \Delta^*$ by $\ksort{<}(a_1\cdots a_n) = a_{i_1}\cdots a_{i_n}$ with
$\{i_1,\ldots,i_n\} = \{1,\ldots,n\}$ and $a_{i_1} \leq \cdots \leq a_{i_n}$.

\begin{lemma}\label{lemma:slp_sort}
  Let $G$ be an SSLP over $\Delta$ and let $<$ be some total
  order on $\Delta$. We can construct in time $\mathcal{O}(|\Delta|\cdot|G|)$
  an SSLP $G'$ such that $\valX{G'} = \ksort{<}(\valX{G})$.
\end{lemma}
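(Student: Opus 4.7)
The plan is to exploit the fact that the sorted string $\ksort{<}(\valX{G})$ has a simple block structure: if $a_1 < a_2 < \cdots < a_k$ are the distinct letters appearing in $\valX{G}$, then $\ksort{<}(\valX{G}) = a_1^{n_1} a_2^{n_2} \cdots a_k^{n_k}$ where $n_i = |\valX{G}|_{a_i}$ is the number of occurrences of $a_i$ in $\valX{G}$. So I only need to represent each power $a_i^{n_i}$ by a short SSLP; then concatenating them in the order given by $<$ yields $G'$.

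Concretely, for each variable $A$ of $G = (V,\rhs)$ and each letter $a \in \Delta$ that occurs in some right-hand side of $G$ (a set of size $O(|G|)$; letters not in this set can be ignored since they do not appear in $\valX{G}$), I would introduce a new variable $A_a$ in $G'$, intended to satisfy $\valXG{G'}{A_a} = a^{|\valXG{G}{A}|_a}$. W.l.o.g. assume $\rhs(V) \subseteq VV \cup \Delta \cup \{\eps\}$. The right-hand sides in $G'$ are then determined by a straight case distinction:
\begin{itemize}
\item if $\rhs(A) = \eps$, set $\rhs'(A_a) = \eps$;
\item if $\rhs(A) = b$, set $\rhs'(A_a) = a$ if $a = b$ and $\rhs'(A_a) = \eps$ otherwise;
\item if $\rhs(A) = BC$, set $\rhs'(A_a) = B_a C_a$.
\end{itemize}
A straightforward induction on the partial order $E(G)^*$ then shows $\valXG{G'}{A_a} = a^{|\valXG{G}{A}|_a}$, which gives in particular $\valXG{G'}{S_a} = a^{|\valX{G}|_a}$. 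Finally, letting $a_1 < \cdots < a_k$ be the letters actually occurring in right-hand sides of $G$, I would add a start variable $S'$ together with $k-1$ auxiliary concatenation variables producing $S_{a_1} S_{a_2} \cdots S_{a_k}$, which by construction equals $\ksort{<}(\valX{G})$.

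The size bookkeeping is immediate: $G'$ has $O(|\Delta| \cdot |V|) = O(|\Delta|\cdot|G|)$ variables, each with a right-hand side of length at most $2$, and can be produced in the same time bound provided that for each variable $A$ we can iterate through the relevant letters in $O(|\Delta|)$ time. There is no real obstacle here; the only slightly delicate point is the treatment of the parameter $|\Delta|$ when $\Delta$ is infinite, which is handled by restricting attention once and for all to the finite set of letters that appear in the right-hand sides of $G$. The lemma therefore reduces to a clean dag-doubling construction, analogous in spirit to the factorisation construction of Lemma~\ref{lemma:slp_split}, with the total order $<$ entering only at the very last step in the assembly of $S'$.
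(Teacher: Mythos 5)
Your construction is essentially identical to the paper's proof: the same per-letter projection variables $A_a$ with $\valXG{G'}{A_a} = a^{m_a}$ (where $m_a$ counts occurrences of $a$ in $\valXG{G}{A}$), the same case distinction on right-hand sides, and the same final concatenation of the $S_a$ in $<$-order. Your remark about restricting to the letters actually occurring in $G$ is a minor refinement of the same argument, not a different route.
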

\begin{proof}
Let $G = (V,S,\rho)$.
We define the SSLP $G' = (V', S', \rhs')$ over $\Delta$
where $V' = V \uplus \{A_a \mid A \in V, a \in \Delta\}$ with new variables $A_a \notin V$,
and $\rhs'$ defined by
\begin{itemize}
\item $\rhs'(A_a) = \eps$ if $\rhs(A) \in \{\eps\} \cup (\Delta \setminus \{a\})$,
\item $\rhs'(A_a) = a$ if $\rhs(A) = a$,
\item $\rhs'(A_a) = B_aC_a$ if $\rhs(A) = BC$,
\item $\rhs'(S') = A_{a_1} \ldots A_{a_n}$ if $\Delta = \{a_1, \ldots, a_n\}$ with $a_1 < \cdots < a_n$.
\end{itemize}
A straightforward induction shows that $\valXG{G'}{A_a} = a^{m_a}$ where
$m_a$ is the number of occurrences of $a$ in $\valXG{G}{A}$.
\end{proof}

In order to define the commutative normal form, we need a total order
on $\mathcal{F}_0(\Sigma)$. Recall that elements of $\mathcal{F}_0(\Sigma)$ are particular strings
over the alphabet $\Gamma := \Sigma \cup \{\auf, \zu\}$. Fix an arbitrary total
order on $\Gamma$ and let
$<_{\llex}$ be the {\em length-lexicographic order} on $\Gamma^*$ induced by
$<$: for $x,y \in \Gamma^*$ we have $x <_{\llex} y$
if $|x| < |y|$ or ($|x|=|y|$,  $x = u a v$, $y = u b v'$, and $a < b$ for $u,v,v' \in \Gamma^*$ and $a,b \in \Gamma$).
We now consider the restriction of $<_{\llex}$ to $\mathcal{F}_0(\Sigma) \subseteq \Gamma^*$.
For the proof of the following lemma one first constructs SSLPs for the strings
$\valX{F_1}, \valX{F_2} \in \Gamma^*$ (the construction is similar to the case of TSLPs, see
\cite{DBLP:journals/is/BusattoLM08}) and then uses \cite[Lemma~3]{DBLP:conf/icalp/LohreyMP15}
according to which SSLP-encoded strings can be compared in polynomial time with respect to
$<_{\llex}$.

\begin{lemma}
\label{corollary:fslp_equal}
For two FSLPs $F_1$ and $F_2$ we can check in polynomial time whether
$\valX{F_1} = \valX{F_2}$, $\valX{F_1} <_{\text{llex}} \valX{F_2}$ or
$\valX{F_2} <_{\text{llex}} \valX{F_1}$.
\end{lemma}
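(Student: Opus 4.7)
The plan is to reduce the problem to lexicographic comparison of SSLP-encoded strings, for which a polynomial-time algorithm is already known (\cite{DBLP:conf/icalp/LohreyMP15}, Lemma~3). The bridge is a linear-time construction that turns an FSLP into an SSLP generating the term representation of the produced forest as a string over $\Gamma = \Sigma \cup \{\auf,\zu\}$.

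First I would apply Theorem~\ref{lemma:fslp_normal_form} to put $F_1$ and $F_2$ into normal form with only a constant-factor size blow-up. Working in normal form is convenient because every variable $A \in V_1$ produces a tree context whose term representation has the shape $L_A \cdot x \cdot R_A$ for uniquely determined strings $L_A, R_A \in \Gamma^*$, so one can split the context into a ``left part'' and a ``right part'' along the parameter. Concretely, for an FSLP $F = (V,S,\rho)$ in normal form I construct an SSLP $G$ over $\Gamma$ with variables $\{A' : A \in V_0\} \cup \{L_A, R_A : A \in V_1\}$ and right-hand sides
\begin{itemize}
\item $A' \to \varepsilon$ if $\rho(A) = \varepsilon$;
\item $A' \to B'C'$ if $\rho(A) = BC$ with $A \in V_0^\top$;
\item $A' \to L_B\, C'\, R_B$ if $\rho(A) = B\lan C\ran$ with $A,C \in V_0^\bot$, $B \in V_1$;
\item $L_A \to L_B L_C$ and $R_A \to R_C R_B$ if $\rho(A) = B\lan C\ran$ with $A,B,C \in V_1$;
\item $A' \to a \auf B' \zu$ if $\rho(A) = \addroot{a}{B}$ with $A \in V_0^\bot$;
\item $L_A \to a \auf B'$ and $R_A \to C' \zu$ if $\rho(A) = \addroot{a}{BxC}$ with $A \in V_1$,
\end{itemize}
treating $a,\auf,\zu$ as terminal symbols of $G$. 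A straightforward induction along $E(F)^*$ shows that $\valXG{G}{A'}$ is the term representation of $\valXG{F}{A}$ for $A \in V_0$, and $\valXG{G}{L_A} \cdot x \cdot \valXG{G}{R_A}$ is the term representation of $\valXG{F}{A}$ for $A \in V_1$. Setting the start variable to $S'$ yields an SSLP of size $O(|F|)$ computing $\valX{F}$ as a string in $\Gamma^*$, produced in linear time.

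Applying this construction to $F_1$ and $F_2$ I obtain SSLPs $G_1$ and $G_2$ over $\Gamma$ with $\valX{G_i} = \valX{F_i}$ viewed as a string. I then fix any total order on $\Gamma$ and invoke \cite[Lemma~3]{DBLP:conf/icalp/LohreyMP15}, which decides in polynomial time whether $\valX{G_1} = \valX{G_2}$, $\valX{G_1} <_{\llex} \valX{G_2}$, or $\valX{G_2} <_{\llex} \valX{G_1}$. Since lex-comparison on $\Gamma^*$ restricted to $\mathcal{F}_0(\Sigma)$ is exactly $<_{\llex}$, this yields the claim. There is essentially no serious obstacle here: the only care needed is in the case $\rho(A) = B\lan C\ran$ with $A,B,C \in V_1$, where the right part of the context $B$ sits \emph{after} the substituted context $C$, so $R_A$ must be defined as $R_C R_B$ (not $R_B R_C$); once the left/right convention for $V_1$-variables is fixed, the inductive verification is routine and the polynomial-time bound follows from \cite[Lemma~3]{DBLP:conf/icalp/LohreyMP15}.
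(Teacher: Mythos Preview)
Your proposal is correct and follows essentially the same approach as the paper: put the FSLPs into normal form, split each $V_1$-variable into a left and a right SSLP variable (the paper calls them $A_1,A_2$ where you write $L_A,R_A$), translate the remaining normal-form rules to SSLP rules over $\Sigma\cup\{\auf,\zu\}$ in the obvious way, and then invoke \cite[Lemma~3]{DBLP:conf/icalp/LohreyMP15} for the length-lexicographic comparison. The constructions and the key reversal $R_A = R_C R_B$ for $\rho(A)=B\lan C\ran$ with $A,B,C\in V_1$ match the paper's exactly.
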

\begin{proof}
From $F_1$ and $F_2$ we first construct two SSLPs $G_1$ and $G_2$ that
produce $\valX{F_1}$ and $\valX{F_2}$, respectively, where the latter are viewed as a string over the alphabet
$\Sigma \cup \{ \auf, \zu \}$. The construction is similar to the case of TSLPs; see~\cite{DBLP:journals/is/BusattoLM08}:
Consider $F_1 = (V,S,\rho)$. By Theorem~\ref{lemma:fslp_normal_form} we can assume that $F_1$ is in normal form.
We define the SSLP $G_1 = (V', S, \rhs')$ over $\Sigma \cup \{ \auf, \zu \}$, where
$V' = V_0 \cup \{ A_1, A_2 \mid A \in V_1 \}$ and $\rhs'$ is defined as follows:
\begin{itemize}
\item If $\rhs(A) = \varepsilon$ or $\rhs(A) = BC$
  then  $\rhs'(A) = \rhs(A)$,
\item If $\rhs(A) = \addroot{a}{B}$ then $\rhs'(A) = a \auf B \zu$.
\item If $\rhs(A)= B \langle C\rangle$ with $C \in V_0$ then $\rhs'(A) = B_1 C B_2$.
\item If $\rhs(A) = \addroot{a}{BxC}$ then $\rhs'(A_1) = a \auf B$, and
  $\rhs'(A_2) = C \zu$.
\item If $\rhs(A) = B\langle C \rangle$ with $C \in V_1$ then $\rhs'(A_1) = B_1 C_1$
  and $\rhs'(A_2) = C_2 B_2$.
\end{itemize}
The correctness of the construction can be easily verified.

The rest of the proof follows immediately from \cite[Lemma~3]{DBLP:conf/icalp/LohreyMP15}:
Given SSLPs $G_1$ and $G_2$ over the same terminal alphabet $\Gamma$, we can
check in polynomial time whether $\valX{G_1} <_{\text{llex}} \valX{G_2}$, $\valX{G_2} <_{\text{llex}} \valX{G_1}$ or
$\valX{G_1} = \valX{G_2}$.
\end{proof}
From the restriction of $<_{\llex}$ to
$\mathcal{T}_0(\Sigma) \subseteq \Gamma^*$ we obtain the function $\ksort{<_{\llex}}$ on
$\mathcal{T}_0(\Sigma)^* = \mathcal{F}_0(\Sigma)$. We define
$\fcanon \colon \mathcal{F}_0(\Sigma) \to \mathcal{F}_0(\Sigma)$ by
\begin{align*}
  \fcanon(a\auf f \zu)  & =
  \begin{cases}
    a \auf \ksort{<_{\llex}}(\fcanon(f)) \zu & \text{if } a \in \mathcal{C} \\
    a \auf \fcanon(f) \zu         & \text{otherwise,}
  \end{cases} \\
 \fcanon(t_1\cdots t_n) & =  \fcanon(t_1) \cdots \fcanon(t_n).
\end{align*}
Obviously, $f_1, f_2 \in \mathcal{F}(\Sigma)$ are equal modulo the identities
in \eqref{eq-comm} for all $c \in \mathcal{C}$ if and only if
$\fcanon(f_1) = \fcanon(f_2)$. Using this fact and Lemma~\ref{lemma-only-assoc} it is not hard to show:
\begin{lemma}\label{lemma-assoc-comm}
For $f_1, f_2 \in \mathcal{F}_0(\Sigma)$ we have $\fcanon(\nfa(f_1)) =
\fcanon(\nfa(f_2))$ if and only if $f_1$ and $f_2$ are equal modulo
the identities in \eqref{eq-assoc}  and \eqref{eq-comm} for all $a \in \mathcal{A}$, $c \in \mathcal{C}$.
\end{lemma}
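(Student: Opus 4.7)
The plan is to split the biconditional into two directions and to isolate the interaction
between the two normal forms in a single compatibility claim.

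The ``$\Rightarrow$'' direction is routine. Assume $\fcanon(\nfa(f_1))=\fcanon(\nfa(f_2))$.
The remark stated just before the lemma says that $\fcanon$ is a canonical form for the commutativity
identities, so $\nfa(f_1)$ and $\nfa(f_2)$ are equal modulo \eqref{eq-comm}. Combined with
Lemma~\ref{lemma-only-assoc}, which gives that $f_i$ and $\nfa(f_i)$ are equal modulo \eqref{eq-assoc}
for $i=1,2$, chaining the three equivalences yields that $f_1$ and $f_2$ are equal modulo both
\eqref{eq-assoc} and \eqref{eq-comm}.

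The ``$\Leftarrow$'' direction reduces to the following sub-lemma: if $f$ and $f'$ are equal modulo
\eqref{eq-comm}, then so are $\nfa(f)$ and $\nfa(f')$. Granting this, a witness for $f_1$ and $f_2$
being equal modulo \eqref{eq-assoc} and \eqref{eq-comm} is a finite chain of one-step rewrites, each
either an associativity or a commutativity swap. Each associativity step preserves $\nfa$ exactly by
Lemma~\ref{lemma-only-assoc}, and by the sub-lemma each commutativity step preserves $\nfa$ up to
equality modulo \eqref{eq-comm}; applying $\fcanon$ then collapses this remaining equivalence. Hence
$\fcanon\circ\nfa$ is constant along the chain, and $\fcanon(\nfa(f_1))=\fcanon(\nfa(f_2))$ follows.

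The sub-lemma is the technical core, and I expect it to be the main obstacle. I would prove it by
induction on the forest structure, reducing to a single local permutation of the children
$t_1,\ldots,t_n$ of some $c$-labelled node $v$ with $c \in \mathcal{C}$. When $\nfa$ is computed, $v$
is processed by some call $\pull_a(c \auf t_1 \cdots t_n \zu)$, where $a$ is the label of the nearest
ancestor of $v$ that has not been dissolved by $\pull$; by the recursion in the definition of
$\pull_a$, this is also the label of the nearest ancestor of $v$ that survives in $\nfa(f)$. Two
cases arise. If $c \notin \mathcal{A}$ or $a \neq c$, then $v$ survives and the result is
$c \auf \pull_c(t_1) \cdots \pull_c(t_n) \zu$; permuting the $t_i$ permutes the blocks $\pull_c(t_i)$
under the retained commutative $c$-node, which is exactly a \eqref{eq-comm}-step. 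If
$c = a \in \mathcal{A}$, then $v$ is dissolved and $\pull_a(t_1) \cdots \pull_a(t_n)$ is flattened
into the enclosing context; since the nearest surviving ancestor is itself labelled $a = c$ and
therefore lies in $\mathcal{C}$, the permuted blocks can be absorbed by a commutativity step at that
ancestor. In either case $\nfa(f)$ and $\nfa(f')$ are equal modulo \eqref{eq-comm}, as required. The
delicate point is the second case, where the permutation at $v$ effectively ``migrates'' through the
dissolved node and must be absorbed at the outer $a$-ancestor; verifying that the design of $\pull_a$
guarantees the existence of such an ancestor with matching label is what makes the argument go
through.
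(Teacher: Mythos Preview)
Your proof is correct and its high-level architecture matches the paper's: reduce to single rewrite steps, dispose of associativity steps via Lemma~\ref{lemma-only-assoc}, and for commutativity steps establish the sub-lemma that $\nfa$ preserves equality modulo~\eqref{eq-comm}. Where you diverge is in the proof of that sub-lemma. The paper argues abstractly via a commutation of relations: writing $\to$ for the associativity reduction from~\eqref{rewrite-sys}, it observes that $f_1 =_{\text{\eqref{eq-comm}}} f_2 \to f_2'$ implies $f_1 \to f_1' =_{\text{\eqref{eq-comm}}} f_2'$ for some $f_1'$, iterates this along $f_2 \to^* \nfa(f_2)$, and notes that the resulting $f_1'$ must be $\to$-irreducible (since $=_{\text{\eqref{eq-comm}}}$ preserves irreducibility), hence equals $\nfa(f_1)$. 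Your route is a direct structural analysis of the recursion defining $\pull_a$, tracking the ``nearest surviving ancestor'' of the permuted node and using that its label is $c\in\mathcal{C}$ to absorb the migrated permutation. Both approaches are sound; the paper's is shorter and stays at the level of rewrite relations, while yours is more concrete and actually explains \emph{why} the commutation works (the key observation that the subscript of $\pull_a$ always records the label of the nearest retained ancestor, so the dissolving case $a=c\in\mathcal{A}$ can only arise beneath a surviving $c$-node). One small point worth tightening in your write-up: the induction is cleanest if stated as ``for every $a\in\Sigma\cup\{\bullet\}$ that actually arises in the computation of $\nfa(f)$ at the permuted node, $\pull_a(f)$ and $\pull_a(f')$ become equal modulo~\eqref{eq-comm} once wrapped under the corresponding surviving ancestor,'' since at the bare forest level a block permutation is not itself a~\eqref{eq-comm} step.
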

\begin{proof}
It suffices to show that $\fcanon(\nfa(f_1)) = \fcanon(\nfa(f_2))$ if $f_1$ and $f_2$ can be transformed
into each other by a single application of \eqref{eq-assoc} or \eqref{eq-comm}; let us write $f_1 =_{\text{\eqref{eq-assoc}}} f_2$
or $f_1 =_{\text{\eqref{eq-comm}}} f_2$, respectively, for the latter. The case $f_1 =_{\text{\eqref{eq-assoc}}} f_2$ is clear, since this implies
$\nfa(f_1)=\nfa(f_2)$ by Lemma~\ref{lemma-only-assoc}. Now assume that $f_1 =_{\text{\eqref{eq-comm}}} f_2$.
As in the proof of Lemma~\ref{lemma-only-assoc}, consider the infinite
rewriting system with the rules from  \eqref{rewrite-sys} and the associated rewrite relation $\to$.
The crucial observation is that $f_1 =_{\text{\eqref{eq-comm}}} f_2 \to f'_2$ implies that there exists $f'_1$
such that $f_1 \to f'_1 =_{\text{\eqref{eq-comm}}} f'_2$.
Since $f_2 \to^* \nfa(f_2)$, it follows that there exists $f'_1$ such that $f_1 \to^* f'_1 =_{\text{\eqref{eq-comm}}}  \nfa(f_2)$.
But this implies that $f'_1$ is irreducible with respect to $\to$, i.e., $f'_1 = \nfa(f_1)$.
We obtain $\nfa(f_1) =_{\text{\eqref{eq-comm}}}  \nfa(f_2)$ and hence $\fcanon(\nfa(f_1)) = \fcanon(\nfa(f_2))$.
\end{proof}
For our main technical result (Theorem~\ref{theorem:canonize}) we need a strengthening
of our FSLP normal form. Recall the notion of the {\em spine} from Section~\ref{sec-FSLP}.
We say that an FSLP $F = (V,S,\rhs)$ is in \emph{strong normal form} if
it is in normal form and for every $A \in V_0^\bot$ with $\rhs(A) = B \lan C \ran$
either $B \in V_1^\bot$ or $|\valXG{F}{C}| \geq |\valXG{F}{D}| - 1$ for every
$D \in V_1^\bot$ which occurs in $\fspine(B)$ (note that  $|\valXG{F}{D}| - 1$
is the number of nodes in $\valXG{F}{D}$ except for the parameter $x$).

\begin{lemma}
\label{lemma:simple_nf}
From a given FSLP $F = (V,S,\rhs)$ in normal form we can construct in
polynomial time an FSLP $F' = (V',S,\rhs')$ in strong normal form with
$\valX{F} = \valX{F'}$.
\end{lemma}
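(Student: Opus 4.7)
The plan is to iteratively restructure the right-hand sides of variables $A \in V_0^\bot$ that violate the strong normal form condition. First, I would compute $|\valXG{F}{A}|$ for every $A \in V$ in linear time by a bottom-up pass on the dag, and, using reachability in $\spineslp{F}$, determine for each $B \in V_1^\top$ the set $\fspine(B) \cap V_1^\bot$ together with the largest element $D^*$ in this set.

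Suppose $A \in V_0^\bot$ with $\rhs(A) = B \lan C \ran$, $B \in V_1^\top$, violates the condition, i.e.\ $|\valXG{F}{D^*}| - 1 > |\valXG{F}{C}|$. I would then split the spine of $B$ at an occurrence of $D^*$: walking down the (acyclic) parse dag of $\spineslp{F}$ from $B$ along a path ending at an occurrence of $D^*$, standard SSLP prefix/suffix-splitting gives in polynomial time two variables $B_1, B_2 \in V_1$ such that $\valXG{F}{B} = \valXG{F}{B_1 \lan D^* \lan B_2 \ran \ran}$. Writing $\rhs(D^*) = \addroot{a}{B^* x C^*}$, the tree produced by $A$ equals $B_1 \lan a \lan B^* \cdot (B_2 \lan C \ran) \cdot C^* \ran \ran$. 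I would therefore introduce fresh variables $E \in V_0^\bot$ with $\rhs(E) = B_2 \lan C \ran$, a $V_0^\top$-variable producing $B^* \, E \, C^*$ (split into binary concatenations to fit the normal form), and $C' \in V_0^\bot$ with $\rhs(C') = \addroot{a}{\cdot}$, and finally redefine $\rhs(A) = B_1 \lan C' \ran$. All right-hand sides are of the forms allowed in the normal form.

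Now observe $|\valXG{F}{C'}| \geq |\valXG{F}{D^*}| + |\valXG{F}{C}| - 1 > 2 \, |\valXG{F}{C}|$, so each restructuring applied to $A$ doubles the size of the second argument. Since $|\valXG{F}{C}| \leq |\valXG{F}{A}| \leq 2^{O(|F|)}$, the condition at $A$ can be violated at most $O(|F|)$ times. The only newly created variable that might itself violate the condition is the $V_0^\bot$-variable $E$ with $\rhs(E) = B_2 \lan C \ran$; but the spine of $B_2$ is a proper suffix of the spine of $B$, so the depth of such cascades is controlled by the parse-dag depth of $\spineslp{F}$ (which is $O(|F|)$). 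A global potential argument — e.g.\ the multiset of pairs $(|\valXG{F}{A}|, \text{largest}_{D \in V_1^\bot \cap \fspine(B)} |\valXG{F}{D}|)$ ordered lexicographically — bounds the total number of restructurings by a polynomial, and each restructuring is performed in polynomial time.

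The main obstacle is the termination and complexity analysis: the construction is conceptually straightforward, but one must carefully define the well-founded measure that simultaneously accounts for (i)~the logarithmic number of doublings of $|\valXG{F}{C}|$ at a single variable and (ii)~the cascading creation of auxiliary variables $E$ with progressively shorter spines, and verify that both SSLP-level spine splitting and the size bookkeeping can be carried out in polynomial time while preserving the (already established) properties of the normal form.
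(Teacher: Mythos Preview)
Your restructuring step is sound, but the complexity analysis has a real gap. First, note that the doubling argument for $A$ is moot: after one restructuring you have $\rhs'(A)=B_1\lan C'\ran$ with $|\valXG{F}{C'}|=|\valXG{F}{D^*}|-1+|\valXG{F}{B_2\lan C\ran}|\ge |\valXG{F}{D^*}|-1$, and since $D^*$ was already the maximum over $\fspine(B)\supseteq\fspine(B_1)$, the strong normal form condition at $A$ is satisfied immediately. So $A$ never needs a second pass, and the ``$O(|F|)$ doublings at $A$'' never happen.

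All the work therefore sits in the cascade through the fresh variables $E=B_2\lan C\ran$, and here your bound does not go through. You argue that the cascade depth is controlled by the parse-dag depth of $\spineslp{F}$, but splitting an SSLP into a prefix and a suffix does not decrease the dag depth; the suffix SSLP for $B_2$ can have the same height as the one for $B$. The only invariant you have is that $\fspine(B_2)$ is a proper suffix of $\fspine(B)$ as a string, and that string can have exponential length, so ``proper suffix'' alone gives no polynomial bound. Your suggested lexicographic potential on $(|\valXG{F}{A}|,\max_D|\valXG{F}{D}|)$ has the same problem: the first coordinate can be exponential and may drop by only a constant per step.

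The paper's proof avoids iteration entirely. For each offending $A$ with $\rhs(A)=B\lan C\ran$ it looks at the set $\{D_1,\dots,D_m\}\subseteq V_1^\bot$ of distinct symbols occurring in $\fspine(B)$, records for each $D_i$ its \emph{last} occurrence position $p_i$ in $\fspine(B)$, and orders the $D_i$ so that $p_m<\dots<p_1$. It then cuts $\fspine(B)$ once at all these positions simultaneously (each piece is an SSLP-computable factor), obtaining $\valXG{F}{B}=\valXG{F}{E_m}\lan\valXG{F}{D_m}\lan\cdots\valXG{F}{E_1}\lan\valXG{F}{D_1}\ran\cdots\ran\ran$. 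The point of ordering by last occurrence (not by size) is that the factor $E_i$ lies strictly to the right of $p_{i+1},\dots,p_m$, hence $\fspine(E_i)\subseteq\{D_1,\dots,D_i\}$; and the inner tree $C_i=\valXG{F}{D_i\lan\cdots D_1\lan C\ran\cdots\ran}$ visibly has size at least $|\valXG{F}{D_j}|-1$ for every $j\le i$. This gives the strong normal form for all the new $E_i\lan C_i\ran$ in one shot, with at most $m\le|V_1^\bot|$ pieces per original variable and hence a direct polynomial bound. Your iterative scheme becomes exactly this if you always split at the \emph{last} occurrence of the current maximum, because then $D^*$ disappears from $\fspine(B_2)$ and the cascade length is bounded by $m$; but that choice of occurrence is precisely the missing ingredient in your argument.
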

\begin{proof}
We modify the right-hand sides of variables $A \in V_0^\bot$ with $\rhs(A) = B \lan C \ran$
and $|\fspine(B)|  \geq 2$. Basically, we replace the vertical concatenations $B \lan C \ran$
by polynomially many vertical concatenations $B_i \lan C_i \ran$ which satisfy
the condition of the strong normal form.

$F'$ is obtained from $F$ by modifying (only) the right-hand sides of
variables $A \in V_0^\bot$ with $\rhs(A) = B \lan C \ran$ and $|\gspine{F}{B}| > 1$.
The modification for such a variable $A$ works as follows.

Let $\gspine{F}{B} = B_1 \cdots B_N$ ($N \geq 1$) and let $\{D_1,\ldots,D_m\} \sleq V_1^\bot$
($m \geq 1$) be the set of all variables which occur in
$\gspine{F}{B}$. For $1 \leq i \leq m$, let $p_i$ be the maximal position
$p \in \{1,\ldots,N\}$ such that $B_p = D_i$, i.e., the position of the last
occurence of $D_i$ in $\gspine{F}{B}$. The number $m$ and the
positions $p_i$ can be computed from $F$ in polynomial time, hence we
may assume that $p_m < \ldots < p_1$ by ordering the symbols $D_i$ in this
way. This means in particular that $p_1 = N$. Additionally, we set
$p_{m+1} = 0$.

For every $1 \leq i \leq m$ we can construct in polynomial time an SSLP
$G_i = (N_i,E_i,\rhs_i)$ over $V_1^\bot$ such that $\valX{G_i} = B_{p_{i+1}+1} \cdots B_{p_i-1}$
(see e.g. \cite[Lemma~1]{LoMaSS12}), hence
$\gspine{F}{B}
= \valX{G_m}B_{p_m} \cdots \valX{G_1}B_{p_1}
= \Mean{E_m}{G_m}D_m \cdots \Mean{E_1}{G_1}D_1$. We may assume that the
variable sets $N_i$ are pairwise disjoint and also disjoint from $V$,
and that  $\rhs_i(N_i) \sleq V_1^\bot \cup N_iN_i$ whenever $\valX{G_i} \neq \varepsilon$. Hence
we can add each $X \in N_i$ (with $\valX{G_i} \neq \varepsilon$) to the variable set
$V_1'$ of $F'$ and define its right-hand side by
\begin{itemize}
\item $\rhs'(X) = Y \lan Z \ran$ if $\rhs_i (X) = YZ$,
\item $\rhs'(X) = \rhs(D)$ if $\rhs_i (X) = D \in V_1^\bot$.
\end{itemize}
Thus we obtain $\Mean{B}{F'} = \Mean{E_m \lan D_m \lan \cdots E_1 \lan D_1 \ran \cdots \ran \ran}{F'}$.

Now we add new variables $A_i$ for $1 \leq i \leq m-1$ and $C_i$ for $1 \leq i
\leq m$ to the variable set $V_0'$ of $F'$ and define
\begin{itemize}
\item $\rhs'(C_1) = D_1 \lan C \ran$,
\item $\rhs'(C_i) = D_i \lan A_{i-1} \ran$ for $2 \leq i \leq m$,
\item $\rhs'(A_i) = E_i \lan C_i \ran$, if $\Mean{E_i}{G} \neq \varepsilon$, otherwise $\rhs'(A_i) = \rhs'(C_i)$\quad for $1 \leq i \leq m - 1$,
\item $\rhs'(A) = E_m \lan C_m \ran$, if $\Mean{E_m}{G} \neq \varepsilon$, otherwise $\rhs'(A) = \rhs'(C_m)$.
\end{itemize}
Obviously, $\Mean{C_i}{F'} = \Mean{D_i \lan \ldots D_1 \lan C \ran \ldots \ran}{F'}$ for
$1 \leq i \leq m$, which implies $|\Mean{C_i}{F'}| \geq |\Mean{D_j}{F'}| - 1$
for all $1 \leq j \leq i \leq m$ (equality holds if $i = m$ and $\Mean{C}{F'} =
\varepsilon$, since the parameter $x$ of $D_m$ disappears in this
case). Hence, the right-hand sides $\rhs'(A_i)$ and $\rhs'(A)$ meet the
definition of strong normal form. Moreover, $\Mean{A}{F'} =
\Mean{E_m \lan D_m \lan \ldots D_1 \lan C \ran \ldots \ran}{F'} = \Mean{B \lan C \ran}{F'}$. By
induction on the partial order of the dag, this implies $\Mean{A}{F'}
= \Mean{A}{F}$ for \emph{all} $A \in V$, because the right-hand
sides of other variables in $V$ are not modified. In particular,
$\valX{F'} = \Mean{S}{F'} = \Mean{S}{F} = \valX{F}$, which concludes
the proof.
\end{proof}

\begin{theorem}
\label{theorem:canonize}
From a given FSLP $F$ we can construct in polynomial time an
FSLP $F'$ with $\valX{F'} = \nfc(\valX{F})$.
\end{theorem}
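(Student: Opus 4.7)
The plan is to bring $F$ into strong normal form using Lemma~\ref{lemma:simple_nf}, then process its variables in topological order along $E(F)^*$, building for each $A$ a new variable whose value is $\nfc(\valXG{F}{A})$ when $A \in V_0$, or a template from which $\nfc$ of any extension of the context can be assembled when $A \in V_1$. The two algorithmic tools are Lemma~\ref{corollary:fslp_equal}, used to set up a total order on the already-canonicalized variables, and Lemma~\ref{lemma:slp_sort}, used to sort SSLP-compressed sequences of such variables. Correctness will be a straightforward induction along $E(F)^*$ once the construction is specified.

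The cheap cases distribute over $\nfc$: for $A \in V_0^\top$ with $\rho(A) = \varepsilon$ or $\rho(A) = BC$ we simply copy the right-hand side, and for $\rho(A) = \addroot{a}{B}$ with $a \notin \mathcal{C}$ we set $\rho'(A) = \addroot{a}{B'}$. The first real step is $\rho(A) = \addroot{a}{B}$ with $a \in \mathcal{C}$, where $\nfc(\valXG{F}{A}) = a \auf \ksort{<_{\llex}}(\nfc(\valXG{F}{B})) \zu$. Because $F$ is in normal form, $\ftrees(B) \in (V_0^\bot)^*$ is computed by the horizontal SSLP $\ribslp{F}$. After inductively canonicalizing every $D \in V_0^\bot$, at most $O(|V|^2)$ applications of Lemma~\ref{corollary:fslp_equal} yield a total order $<$ on $V_0^\bot$ that agrees with $<_{\llex}$ on the canonical values. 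Applying Lemma~\ref{lemma:slp_sort} to the restriction of $\ribslp{F}$ reachable from $B$ under this order produces an SSLP over $V_0^\bot$ whose value, once each letter $D$ is unfolded to its canonical variable, is exactly $\ksort{<_{\llex}}(\nfc(\valXG{F}{B}))$; wrapping this under $\addroot{a}{\cdot}$ gives the new variable.

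The hard case is $A \in V_0^\bot$ with $\rho(A) = B \lan C \ran$ and $B \in V_1^\top$, where $\fspine(B)$ is an SSLP-compressed string $D_{i_1} \cdots D_{i_N}$ of basic contexts $D \in V_1^\bot$ with $\rho(D) = \addroot{b}{L x R}$. At commutative spine nodes the canonical order of children depends on the subtree below, which grows as we descend the spine, so $B$ cannot be canonicalized in isolation. This is where strong normal form is essential: it forces $|\valXG{F}{C}| \geq |L| + |R|$ for every $D$ on $\fspine(B)$, so the subtree attached at $D$'s parameter is at least as large as the entire side forest $LR$ and therefore weakly largest among $D$'s children in $<_{\llex}$. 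Except when $LR$ consists of a single tree (a case resolved by one llex-comparison via Lemma~\ref{corollary:fslp_equal}), it is strictly largest and lands in the last position of the sorted child list. For each $D \in V_1^\bot$ I would precompute the sorted canonical form of $\nfc(L)\nfc(R)$ together with the constant-size data recording where the below-subtree is to be inserted, and then propagate these local data along $\spineslp{F}$. To keep the construction polynomial I would factor $\spineslp{F}$ by the commutative basic spine variables as in Section~\ref{subsec:factorization}, so that maximal runs of non-commutative spine nodes collapse into plain vertical concatenations and only commutative cuts trigger a fresh sort-and-insert. The remaining cases --- $A \in V_1^\bot$ with $\rho(A) = \addroot{a}{BxC}$, and $A \in V_1^\top$ with $\rho(A) = B \lan C \ran$ for $A,B,C \in V_1$ --- carry the same templates along the context structure without producing trees directly, and fit the same scheme.

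The main obstacle is precisely this last propagation: one must show that iterating ``sort the local side forest and insert the subtree below'' along a spine of length up to $2^{|F|}$ collapses to $O(|F|)$ distinct FSLP right-hand sides, with only $O(|V|^2)$ additional $<_{\llex}$-comparisons and one application of Lemma~\ref{lemma:slp_sort} per commutative layer. Strong normal form is what makes this possible, because it pins down the insertion position up to a bounded local comparison at each commutative level; without it, the insertion position would depend globally on the spine depth and the naive construction would blow up with the spine length. Putting the pieces together yields an FSLP $F'$ of polynomial size with $\valX{F'} = \nfc(\valX{F})$, constructed in polynomial time.
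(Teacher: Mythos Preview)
Your plan is the paper's plan: strong normal form, then process variables along the dag, using Lemma~\ref{corollary:fslp_equal} to order the already-canonicalised $V_0^\bot$-variables and Lemma~\ref{lemma:slp_sort} to sort the horizontal SSLPs below commutative symbols. Where you diverge is in the treatment of $V_1$, and there you make the argument harder than it is.

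The cause is a small miscount. For $D \in V_1^\bot$ with $\rhs(D) = \addroot{a}{LxR}$ one has $|\valXG{F}{D}| = 2 + |L| + |R|$ (root, parameter, and the two side forests), so strong normal form actually gives $|\valXG{F}{C}| \geq 1 + |L| + |R|$, which is \emph{strictly} larger than every tree occurring in $LR$, including the case where $LR$ is a single tree. There is never a tie to break. The paper packages this by restricting the invariant for context variables to the set $\args{A} = \{\, t \in \mathcal{T}_0(\Sigma) : |t| \geq |\valXG{F}{D}| - 1 \text{ for every } D \text{ in } \fspine(A)\,\}$ and proving only $\valXG{F'}{A}\langle\nfc(t)\rangle = \nfc(\valXG{F}{A}\langle t\rangle)$ for $t \in \args{A}$; this set is closed under going up the spine, and strong normal form says $\valXG{F}{C} \in \args{B}$ whenever $\rhs(A) = B\langle C\rangle$ with $B \in V_1^\top$.

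Once the substituted subtree is guaranteed to be the strictly $\llex$-largest child, the parameter always sorts to the \emph{last} position, independently of what is below. So the paper simply sets $\rhs'(D) = \addroot{a}{S_w\,x}$ (with $w$ the $<$-sorted version of $\ftrees(LR)$) for every commutative $D \in V_1^\bot$, and leaves every $A \in V_1^\top$ literally unchanged, $\rhs'(A) = B\langle C\rangle$. No factorisation of $\spineslp{F}$, no propagation of insertion data, no per-level comparisons: the spine structure is preserved verbatim and your ``main obstacle'' evaporates. The only residual case is $\rhs(A) = B\langle C\rangle$ with $A,C \in V_0^\bot$ and $B \in V_1^\bot$ commutative, where strong normal form imposes nothing; there the paper just unfolds $\rhs(B) = \addroot{a}{DxE}$ and sorts $\ftrees(DCE)$ directly. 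With this simplification the construction is transparently local and polynomial.
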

\begin{proof}
Let $F = (V,S,\rhs)$. By Theorem~\ref{lemma:fslp_normal_form} and
Lemma~\ref{lemma:simple_nf} we may assume that $F$ is in strong normal
form. For every $A \in V_1$ let
\[ \args{A} = \{ t \in \mathcal{T}_0(\Sigma) \mid |t| \geq |\valXG{F}{D}| - 1
               \text{ for each symbol $D$ in $\fspine(A)\}$} \]
We want to construct an FSLP $F' = (V',S,\rhs')$ with $V_0 \sleq V_0'$ and $V_1 = V_1'$
such that
\begin{enumerate}[(1)]
\item\label{item:nfc0}
  $\valXG{F'}{A} = \nfc(\valXG{F}{A})$ \ for all $A \in V_0$,
  \item\label{item:nfc1}
  $\valXG{F'}{A} \lan \nfc(t) \ran = \nfc (\valXG{F}{A} \lan t \ran)$ \
  for all $A \in V_1$, $t \in \args{A}$.
\end{enumerate}
From~\ref{item:nfc0} we obtain
 $\valX{F'} = \valXG{F'}{S} =
\nfc(\valXG{F}{S}) = \nfc(\valX{F})$ which concludes the proof.

To define $\rhs'$, let $V^c = V_0^c \cup V_1^c$ with
$V_1^c = \{ A \in V_1 \mid \rhs(A) = \addroot{a}{BxC} \text{ with } a \in \Comm\}$ and
$V_0^c = \{ A \in V_0 \mid \rhs(A) = \addroot{a}{B} \text{ with } a \in \Comm \text{ or }
                      \rhs(A) = D \lan C \ran \text{ with } D \in V_1^c \}$ be the set of \emph{commutative variables}.
We set $\rhs'(A) = \rhs(A)$ for  $A \in V \setminus V^c$. For $A \in V^c$
we define $\rhs'(A)$ by induction along the partial order of the dag:
\begin{enumerate}
\item $\rhs(A) = \addroot{a}{B}$:
  Let $\Down{A}$ be the set of all $C \in V_0^\bot$ which are below $A$ in
  the dag, and let $w = \ftrees(B) =
  \valXG{\ribslp{F}}{B} \in \Down{A}^*$. By induction, $\rhs'$ is already
  defined on $\Down{A}$, and thus $\valXG{F'}{C}$ is defined for every
  $C \in \Down{A}$. By Lemma~\ref{corollary:fslp_equal}, we can compute
  in polynomial time a total order $<$ on $\Down{A}$ such that $C < D$
  implies $\valXG{F'}{C} \leq_\llex \valXG{F'}{D}$ for all $C,D
  \in \Down{A}$. By Lemma~\ref{lemma:slp_sort}, we can construct in
  linear time an SSLP $G_w = (V_w,S_w,\rhs_w)$ with $\valX{G_w} = \ksort{<}(w)$,
  and we may assume that all variables $D \in V_w$ are new. We add
  these variables to $V_0'$ together with their right hand
  sides $\rhs'(D) = \rhs_w(D)$, and we finally set $\rhs'(A) = \addroot{a}{S_w}$.
\item $\rhs(A) = B \lan C \ran$:
  Let $\rhs(B) = \addroot{a}{DxE}$. We define $G_w = (V_w,S_w,\rhs_w)$ as before,
  but with $w = \valXG{\ribslp{F}}{DCE}$ instead of $w =
  \valXG{\ribslp{F}}{B}$, and  we set $\rhs'(A) = \addroot{a}{S_w}$.
  \item $\rhs(A) = \addroot{a}{BxC}$:
  We define $G_w = (V_w,S_w,\rhs_w)$ as before, this time with $w =
  \valXG{\ribslp{F}}{BC}$, and we set $\rhs'(B) = \addroot{a}{S_w x}$.
\end{enumerate}
The main idea is that the strong normal form ensures that in right-hand sides
of the form $\addroot{a}{DxE}$ with $a \in \mathcal{C}$ one can move the parameter $x$
to the last position (see point 3 above),
since only trees that are larger than all trees produced from $D$ and $E$ are substituted for $x$.

Properties~\ref{item:nfc0} and \ref{item:nfc1} are proved by induction
along the partial order of the dag. We only consider the interesting cases, i.e.,
those in which $<_\llex$ plays a role.
\begin{enumerate}[(i)]
\item\label{item:proof-fcanon} $\rhs(A) = \addroot{a}{B}$ with $a \in \Comm$:\smallskip

Let $w = \Mean{B}{\ribslp{F}} = A_1 \cdots A_m$ with $m \geq 0$. Then
\begin{align*}
  \nfc (\Mean{A}{F})
  = {} & \nfc (a \auf \Mean{B}{F} \zu ) \\
  = {} & a \auf \ksort{<_\llex} (\nfc (\Mean{B}{F})) \zu
        \ \text{ by definition of $\nfc$ since $a \in \Comm$}\\
  = {} & a \auf \ksort{<_\llex} (\nfc(\Mean{A_1}{F}) \cdots \nfc(\Mean{A_m}{F})) \zu \\
  = {} & a \auf \ksort{<_\llex} (\Mean{A_1}{F'} \cdots \Mean{A_m}{F'}) \zu
        \ \text{ by induction for $A_1,\ldots,A_m$}\\
  = {} & a \auf \ksort{<_\llex} (\Mean{w}{F'}) \zu \\
  = {} & a \auf \Mean{\ksort{<}(w)}{F'} \zu \\
       & \text{since $A_i < A_j$ implies $\Mean{A_i}{F'} \leq_\llex \Mean{A_j}{F'}$ for $1 \leq i,j \leq m$} \\
  = {} & a \auf \Mean{S_w}{F'} \zu
        \ \text{by definition of $G_w = (V_w,S_w,\rhs_w)$}\\
  = {} & \Mean{A}{F'}
\end{align*}
\item $\rhs(A) = B \lan C \ran$ with $A,C \in V_0^\bot$ and $B \in V_1^c$,
  i.e., $\rhs(B) = \addroot{a}{DxE}$ with $a \in \Comm$:\smallskip

Let $w = \Mean{DCE}{\ribslp{F}} = A_1 \cdots A_m$ with $m \geq 0$. Then
\begin{align*}
  \nfc (\Mean{A}{F})
  & = \nfc (a \auf \Mean{DCE}{F} \zu ) \\
  & = a \auf \ksort{<_\llex} (\nfc (\Mean{DCE}{F})) \zu
       \ \text{ by definition of $\nfc$ since $a \in \Comm$} \\
  & = a \auf \ksort{<_\llex} (\nfc(\Mean{A_1}{F}) \cdots \nfc(\Mean{A_m}{F})) \zu \\
  & = a \auf \Mean{S_w}{F'} \zu
       \ \text{as in~\ref{item:proof-fcanon}} \\
  & = \Mean{A}{F'}
\end{align*}
\item $\rhs(A) = \addroot{a}{BxC}$ with $a \in \Comm$:\smallskip

Let $w = \Mean{BC}{\ribslp{F}} = A_1 \cdots A_m$ with $m \geq 0$, say
$\Mean{B}{\ribslp{F}} = A_1\cdots A_k$ and
$\Mean{C}{\ribslp{F}} = A_{k+1} \cdots A_m$ with $0 \leq k \leq m$.
For every $t \in \args{A}$ and $1 \leq i \leq m$ we have
$|\nfc(t)|
 = |t|
 \geq |\Mean{A}{F}| - 1
 > |\Mean{BC}{F}|
 \geq |\Mean{A_i}{F}|
 = |\nfc (\Mean{A_i}{F})|$,
hence $\nfc(\Mean{A_i}{F'}) \leq_\llex \nfc(t)$.
Thus we obtain
\begin{align*}
  \nfc ( \Mean{A}{F} \lan t \ran )
  = {} & \nfc ( a \auf \Mean{B}{F} \, t \, \Mean{C}{F} \zu ) \\
  = {} & a \auf \ksort{<_\llex}(\nfc (\Mean{B}{F} \, t \, \Mean{C}{F}) \zu
        \ \text{by definition of $\nfc$ since $a \in \Comm$} \\
  = {} & a \auf \ksort{<_\llex}(\nfc (\Mean{A_1}{F} \cdots \Mean{A_k}{F} \, t \,
                              \Mean{A_{k+1}}{F} \cdots \Mean{A_m}{F}) \zu \\
  = {} & a \auf \ksort{<_\llex}(\begin{array}[t]{l}
                        \nfc (\Mean{A_1}{F}) \cdots \nfc (\Mean{A_k}{F})\, \nfc (t) \\
                        \nfc (\Mean{A_{k+1}}{F}) \cdots \nfc (\Mean{A_m}{F}) )\zu
                       \end{array}  \\
       & \text{by definition of $\nfc$} \\
  = {} & a \auf \ksort{<_\llex}(\nfc (\Mean{A_1}{F}) \cdots \nfc (\Mean{A_m}{F})) \, \nfc(t)  \zu \\
       & \text{since $\nfc(\Mean{A_i}{F}) \leq_\llex \nfc(t)$ for $1 \leq i \leq m$} \\
  = {} & a \auf \Mean{\ksort{<}(w)}{F'}\,\nfc(t) \zu
        \ \text{as in~\ref{item:proof-fcanon}} \\
  = {} & \Mean{A}{F'} \lan \nfc (t) \ran
\end{align*}
\item $\rhs(A) = B \lan C \ran$ with $A,C \in V_0^\bot$ and $B \in V_1^\top$:\smallskip

Then  $\rhs'(A) = B \lan C \ran$ and $|\Mean{C}{F}| \geq |\Mean{D}{F}| - 1$ for
every $D$ which occurs in $\fspine(B)$, i.e., $\Mean{C}{F} \in \args{B}$.
Hence
\begin{align*}
  \nfc (\Mean{A}{F})
  & = \nfc (\Mean{B}{F} \lan \Mean{C}{F} \ran )
        \ \text{by induction for $C$}\\
  & = \Mean{B}{F'} \lan \nfc (\Mean{C}{F} ) \ran
        \ \text{by induction for $B$} \\
  & = \Mean{B}{F'} \lan \Mean{C}{F'} \ran \\
  & = \Mean{A}{F'}
\end{align*}
\item $\rhs(A) = B \lan C \ran$ with $A,B,C \in V_1$: \smallskip

Let $t \in \args{A} \sleq \args{B} \cap \args{C}$. Then $\Mean{C}{F} \lan t \ran \in
\args{B}$, and hence
\begin{align*}
  \nfc (\Mean{A}{F})
  = {} & \nfc (\Mean{B}{F} \lan \Mean{C}{F} \lan t \ran \ran ) \\
  = {} & \Mean{B}{F'} \lan \nfc(\Mean{C}{F} \lan t \ran ) \ran
        \ \text{ by induction for $B$} \\
  = {} & \Mean{B}{F'} \lan \Mean{C}{F'} \lan \nfc(t) \ran \ran
       \ \text{ by induction for $C$}\\
  = {} & \Mean{A}{F'} \lan \nfc (t) \ran
\end{align*}
\end{enumerate}
This concludes the proof of the theorem.
\end{proof}

\begin{theorem}\label{cor:main}
For trees $s,t$ we can test in polynomial time whether
$s$ and $t$ are equal modulo the identities in
\eqref{eq-assoc}  and
\eqref{eq-comm}, if $s$ and $t$
are given succinctly by one of the following three formalisms:
({\it i}\/) FSLPs, ({\it ii}\/) top dags, ({\it iii}\/)
TSLPs for the fcns-encodings of $s,t$.
\end{theorem}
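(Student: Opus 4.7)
The plan is to chain together the machinery developed throughout the paper. Given trees $s$ and $t$ in any of the three input formats, the first step is to reduce to the FSLP case: if $s,t$ are given by top dags, apply Proposition~\ref{lemma-top-dag-transform} to obtain FSLPs $F_s,F_t$ in linear time with $\valX{F_s}=s$ and $\valX{F_t}=t$; if they are given by TSLPs for the fcns-encodings, apply Proposition~\ref{lemma-fcns-transform} (the TSLP is a special case of an FSLP, and the proposition converts an FSLP over $\Sigma\uplus\{\bot\}$ computing the fcns-encoding into one over $\Sigma$ computing the original forest). In either case we end up with FSLPs $F_s,F_t$ of size polynomial in the input, with $\valX{F_s}=s$ and $\valX{F_t}=t$.

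Next I would compute canonical forms modulo associativity and commutativity. Apply Lemma~\ref{lemma:assoc} to $F_s$ and $F_t$ to obtain FSLPs $F_s^{\mathcal{A}},F_t^{\mathcal{A}}$ with $\valX{F_s^{\mathcal{A}}}=\nfa(s)$ and $\valX{F_t^{\mathcal{A}}}=\nfa(t)$ (polynomial-time construction, in fact $O(|\Sigma|\cdot|F|)$). Then apply Theorem~\ref{theorem:canonize} to obtain FSLPs $F_s',F_t'$ with $\valX{F_s'}=\nfc(\nfa(s))$ and $\valX{F_t'}=\nfc(\nfa(t))$, again in polynomial time.

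Finally, I would test $\valX{F_s'}=\valX{F_t'}$ using Lemma~\ref{corollary:fslp_equal}, which gives a polynomial-time equality test for FSLP-encoded forests. By Lemma~\ref{lemma-assoc-comm}, this equality holds if and only if $s$ and $t$ are equal modulo the identities \eqref{eq-assoc} and \eqref{eq-comm}. Since each step runs in polynomial time and produces an output of polynomial size, the overall procedure is polynomial.

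There is no genuine obstacle here: the theorem is essentially a corollary that assembles the three conversion results (Propositions~\ref{lemma-top-dag-transform} and \ref{lemma-fcns-transform}), the two normalization constructions (Lemma~\ref{lemma:assoc} and Theorem~\ref{theorem:canonize}), the correctness of the canonical form (Lemma~\ref{lemma-assoc-comm}), and the polynomial-time FSLP equality test (Lemma~\ref{corollary:fslp_equal}). All the hard work lies in Theorem~\ref{theorem:canonize}, which has already been established; the remaining task is only to observe that the pipeline composes correctly.
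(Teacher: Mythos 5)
Your proposal is correct and follows exactly the same route as the paper's proof: reduce to FSLPs via Propositions~\ref{lemma-top-dag-transform} and~\ref{lemma-fcns-transform}, compute FSLPs for $\fcanon(\nfa(\cdot))$ via Lemma~\ref{lemma:assoc} and Theorem~\ref{theorem:canonize}, and conclude with Lemmas~\ref{lemma-assoc-comm} and~\ref{corollary:fslp_equal}. Nothing is missing.
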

\begin{proof}
By Proposition~\ref{lemma-top-dag-transform} and~\ref{lemma-fcns-transform}
it suffices to show Theorem~\ref{cor:main} for the case that $t_1$ and $t_2$ are
given by FSLPs $F_1$ and $F_2$, respectively. By  Lemma~\ref{lemma-assoc-comm}
and Lemma~\ref{corollary:fslp_equal} it suffices to compute in polynomial time
FSLPs $F'_1$ and $F'_2$ for $\fcanon(\nfa(t_1))$ and  $\fcanon(\nfa(t_2))$. This
can be achieved using Lemma~\ref{lemma:assoc} and
Theorem~\ref{theorem:canonize}.
\end{proof}

\section{Future work}

We have shown that simple algebraic manipulations (laws of associativity and commutativity)
can be carried out efficiently on grammar-compressed trees. In the future, we plan to investigate other
algebraic laws. We are optimistic that our approach can be extended by idempotent symbols (meaning that $a\auf f t t g\zu = a\auf f t g\zu$ for forests $f,g$ and a tree $t$).

Another interesting open problem concerns context unification modulo associative and commutative symbols.
The decidability of (plain) context-unification was a long standing open problem that was finally solved by Je\.{z}~\cite{Jez14},
who showed the existence of  a polynomial space algorithm. Je\.{z}'s algorithm uses his recompression technique
for TSLPs. One might try to extend this technique to FSLPs with the goal of proving decidability of context unification
for terms that also contain associative and commutative symbols.
For first-order unification and matching~\cite{GasconGS11},
context matching~\cite{GasconGS11}, and one-context unification~\cite{CreusGG12} there exist
algorithms for  TSLP-compressed trees that match the complexity of their
uncompressed counterparts.
One might also try to extend these results to
the associative and commutative setting.

\end{document}